\documentclass[12pt]{article}

\usepackage{a4wide}
\usepackage{amssymb}
\usepackage{amsmath,amsthm}
\usepackage[latin1]{inputenc}
\usepackage{graphicx}
\usepackage{hyperref}
\usepackage{enumerate}
\usepackage{tikz}
\usepackage{tkz-graph}
\usetikzlibrary{shapes.geometric} 
\usepackage{mathrsfs}
\usepackage{verbatim}
\usepackage[ruled, linesnumbered, vlined,resetcount]{algorithm2e}
\usepackage{algorithmicx}
\usepackage{mathtools}
\usepackage{subfigure}
\usepackage{multirow}
\usepackage[normalem]{ulem}
\usepackage{array}
\newcolumntype{C}[1]{>{\centering\let\newline\\\arraybackslash\hspace{0pt}}m{#1}}
\usepackage{nccmath}

\newtheorem{definition}{Definition}

\newtheorem{proposition}{Proposition}
\newtheorem{remark}{Remark}

\hypersetup{colorlinks=true, linkcolor=blue, citecolor=blue, urlcolor=blue}

\newcommand{\agm}{{\small {\sf AGM}}}
\newcommand{\cagm}{{\small {\sf C-AGM}}}
\newcommand{\cagmdp}{{\small {\sf C-AGMDP}}}
\newcommand{\cpgm}{{\small {\sf CPGM}}}
\newcommand{\dcsbm}{{\small {\sf DCSBM}}}
\newcommand{\tricycle}{{\small {\sf TriCycle}}}
\newcommand{\cagmdpd}{{\small {\sf C-AGMDP-D}}}
\newcommand{\agmdptri}{{\small {\sf AGMDP-Tri}}}
\newcommand{\ca}[1]{\mathcal{#1}}
\newcommand{\tuple}[1]{\langle#1\rangle}

\newcommand{\mmid}{\!\mid\!}

\newcommand{\set}[1]{\{#1\}}
\newcommand{\itin}{{\it intra}}
\newcommand{\itbtw}{{\it inter}}

\newcommand{\itnew}{{\it new}}
\newcommand{\itprev}{{\it prev}}
\newcommand{\itcn}{{\it cn}}
\newcommand{\LS}{{\it LS}}
\newcommand{\CP}{{\ca{C}}}

\newcommand{\oln}{{\overline{n}}}

\newcommand{\olTheta}{{\overline{\Theta}}}
\newcommand{\nintratriangle}{{n^\itin_\triangle}}
\newcommand{\attrneigh}{\sim_{^{at}}}

\title{Publishing Community-Preserving Attributed Social Graphs 
with a Differential Privacy Guarantee}

\author{Xihui Chen$^1$, Sjouke Mauw$^{1,2}$ and Yunior Ram\'{i}rez-Cruz$^1$\\ 
{\small $^1$SnT, $^2$CSC, University of Luxembourg}\\ 
{\small 6, av. de la Fonte, L-4364 Esch-sur-Alzette, Luxembourg}\\ 
{\small \{xihui.chen, sjouke.mauw, yunior.ramirez\}\@@uni.lu}} 

\begin{document}
\maketitle

\begin{abstract} 
We present a novel method for publishing differentially private
synthetic attributed graphs. Unlike preceding approaches,
our method is able to preserve the community structure of the original
graph without sacrificing the ability to capture global structural properties.
Our proposal relies on \cagm, a new community-preserving generative model
for attributed graphs. We equip \cagm{} with efficient methods
for attributed graph sampling and parameter estimation.
For~the latter, we introduce differentially private
computation methods, which allow us to release community-preserving
synthetic \mbox{attributed} social graphs with a strong formal privacy guarantee.
Through comprehensive experiments, we show that our new model
outperforms its most relevant counterparts in synthesising 
differentially private attributed social graphs that preserve 
the community structure of the original graph, 
as well as degree sequences and clustering coefficients.
\end{abstract}

{\it Keywords: attributed social graphs, generative models, differential privacy, 
community detection}

\section{Introduction}
\label{sec:intro}

The use of online social networks (OSNs) has grown steadily during
the last years, and is expected to continue growing in the future.
Billions of people share many aspects of their lives 
on OSNs and use these systems to interact with each other on a regular basis.
The ubiquity of OSNs has turned them into one of the most important
sources of data for the analysis of social phenomena.
Such analyses have led to significant findings used in a wide
range of applications, from efficient epidemic disease
control~\cite{MSWRH16,CRC15} to information diffusion~\cite{ZD11,KK17}.

Despite the undeniable social benefits that can be obtained from social network
analysis, access to such data by third parties such as researchers
and companies should understandably be limited
due to the sensitivity of the information stored in OSNs,
e.g.\ personal relationships, political preferences and religious affiliations.
In addition, the increase of public awareness about privacy and the
entry into effect of strong privacy regulations such as
GDPR~\cite{GDPR} strengthen the reluctance of OSN owners from
releasing their data. Therefore, it is of critical importance 
to provide mechanisms for privacy-preserving data publication 
to encourage OSN owners to release data for analysis.

\emph{Social graphs} are a natural representation of social networks, with nodes
corresponding to participants and edges to connections between participants.
In view of the privacy discussion, social network owners should only
release sanitised sample of the underlying social graphs.
However, it has been shown that even social graphs containing
only structural information remain vulnerable to privacy attacks
leveraging knowledge from public sources~\cite{NS09},
deploying sybil accounts~\cite{BDK11,MRT18c}, etc.
In order to prevent such attacks, a large number of graph
anonymisation methods have been devised. 
Initially, the proposed methods focused on editing
the original graph via vertex/edge additions and deletions until obtaining
a graph satisfying some privacy property.
A critical limitation of graph editing methods is their reliance
on assumptions about the adversary knowledge, which determine
the information that needs to be anonymised and thus the manner
in which privacy is enforced. To avoid this type of assumptions,
an increasingly popular trend is that of using semantic privacy notions,
which place formal privacy guarantees on the data processing algorithms
rather than the dataset. Among semantic privacy notions,
differential privacy~\cite{D06} has become the \emph{de facto} standard
due to its strong privacy guarantees.

According to the type of published data, we can divide differentially private
mechanisms for social graphs into two classes. 
The methods in the first category directly release specific statistics
of the underlying social graph, e.g.\ the degree sequence~\cite{KS12,HLMJ09}
or the number of specific subgraphs (triangles, stars, etc.)~\cite{ZCPSX15}.
The second family of methods focuses on publishing synthetic
social graphs as a replacement of real social networks
in a two-step process~\cite{MW09,SZWZZ11,WW13,XCT14}.
In the first step, differentially private methods are used to compute
the parameters of a generative graph model 
that accurately captures the original graph properties.
Then, in the second step, this model is sampled for synthetic graphs,
profiting from the fact that the result of post-processing 
the output of differentially private algorithms remains differentially 
private~\cite{KL10}. 

Differential privacy requires one to define a \emph{privacy budget} 
in advance, which determines the amount of perturbation that will be applied 
to the outputs of algorithms. In consequence, the methods in the first family 
need to either limit in advance the number of queries that will be answered 
or deliver increasingly lower quality answers. On the contrary, 
the methods in the second family can devote the entire privacy budget 
to the model parameter estimation, without further degradation 
of the privacy of the sampled graphs. For this reason, in this paper we focus 
on the second type of methods.

For analysts, the utility of synthetic graphs is determined
by the ability of the graph models to capture relevant properties
of the original graph. To satisfy this need, several graph models
have been proposed to accurately capture global structural properties
such as degree distributions and clustering coefficients,
as well as heterogeneous attributes
of the users such as gender, education or marital status.
A common limitation of the aforementioned approaches is their inability
to represent an important type of information: the community structure.
Informally, a community is a set of users who are substantially
more interrelated among themselves than to other users of the network.
This interrelation may, e.g., stem from the explicit existence of
relations between the users.
An example of such a community is a group of Gmail users who
frequently e-mail each other, as represented by the occurrence of a
large number of edges connecting the user nodes from the group.
Alternatively, interrelations may stem from the co-occurrence of
relevant features, such as users working at the same company or
alumni from the same university. The emergence of communities
has been documented to be an inherent property of social networks~\cite{PT19,YL15}.
For analysts, the availability of synthetic attributed graphs that preserve
the community structure of the original graph represents an opportunity
to improve existing applications. For example, they may be able to improve
online shopping recommendations based on the common purchases of users belonging
to the same community. Current models and methods are insufficient for enabling
such an analysis, as they either lack information about the community structure
or they lack vertex features. 

In this paper, we address the problem discussed in the previous paragraph
by introducing a new generative attributed graph model, \cagm{}
(short for \emph{Community-Preserving Attributed Graph Model}),
which in addition to global structural properties, is also capable
of preserving the community structure of the original graph.
\cagm{} is based on the attributed graph model \agm~\cite{PMFNG14},
and improves on it by incorporating the capability of preserving 
the number and sizes of the communities of the original graph, as well as 
the densities of intra- and inter-community connections
(that is, connections between nodes belonging to the same community or
to different communities, respectively). \cagm{} also preserves 
a number of statistics describing the correlations between the feature 
vectors that describe the users and the existence of connections between 
pairs of users, as well as their community co-affiliation.
We equip \cagm{} with efficient parameter estimation
and graph sampling methods, and provide differentially private variants
of the former, which allow us to release synthetic attributed social graphs
with a strong privacy guarantee and increased utility
with respect to preceding approaches.

\vspace{1mm}
\noindent
{\bf Summary of contributions:} 
\begin{itemize}
\item We propose a new generative attributed graph model, \mbox{\cagm},
which captures a number of properties of the community structure, as discussed
in the previous paragraph, along with global structural properties.
\item We present efficient methods for learning an instance of our model
from an input graph and sampling community-preserving synthetic attributed graphs
from this instance. We show, via a number of experiments
on real-world social networks, that the community structures of synthetic graphs
sampled from our model are more similar to those of the original graphs
than those of the graphs sampled from previously existing models.
Additionally, we show that this behaviour is obtained without sacrificing
the ability to preserve global structural features.
\item We devise differentially private methods for computing the parameters
of the new model. We demonstrate that our methods are practical
in terms of efficiency and accuracy. To support the latter claim,
we empirically show that differentially private synthetic
attributed graphs generated by our model suffer a reasonably low degradation
with respect to their counterparts, in terms of their ability
to capture the community structure and structural features
of the original graphs.
\end{itemize}

\section{Related Work}
\label{sec:related work}

\noindent
{\bf Private graph synthesis.}
The key to synthesising social graphs is the model which determines
both the information embedded in the published graphs and the properties preserved.
Mir et al.~\cite{MW09} used the Kronecker graph generative model~\cite{LF07}
to generate differentially private graphs. As the Kronecker model cannot
accurately capture structural properties,
Sala et al.~\cite{SZWZZ11} proposed an alternative approach which makes use
of the $dK$-graph model. Wang et al.~\cite{WW13} further improved
the work of Sala et al. by considering global sensitivity instead of
local sensitivity (refer to Section~\ref{sec:preliminaries}
for the definition of sensitivity).
Xiao et al.~\cite{XCT14} introduced the {\small {\sf HRG}}-graph model~\cite{CN08}
and found that it can further reduce the amount of added noise
and thus increase the accuracy.

The approaches described so far work on unlabelled graphs.
Pfiffer et al.~\cite{PMFNG14} introduced a new model called \agm,
which attaches binary attributes to nodes and captures the correlations
between shared attributes and the existence of connections.
Jorgensen et al.~\cite{JYC16} adopted this model and proposed
differentially private methods to accurately estimate the model parameters.
They also designed a new graph generation algorithm based on the {\small {\sf TCL}}
model~\cite{PFMN12}, which enables the model to sample attributed graphs preserving
the clustering coefficient. As discussed previously, \cagm,
the model introduced in this paper, is comparable to this model
in preserving global structural properties of the original graphs,
but it outperforms it by also capturing the community structure.

\vspace{1mm}\noindent
{\bf Private statistics publishing.} Degree sequences and degree correlations
are two types of the statistics frequently studied in the literature.
The general trend in publishing these statistics under differential privacy
consists in adding noise to the original sequences and then post-processing
the perturbed sequences to enforce or restore certain properties,
such as graphicality~\cite{KS12}, vertex order 
in terms of degrees~\cite{HLMJ09}, etc. 
Subgraph count queries, e.g. the number of triangles or $k$-stars,
have also received considerable attention. Among the approaches
to accurately compute such queries, we have the so-called
ladder functions~\cite{ZCPSX15} and smooth sensitivity~\cite{KRSY14,WWZX12}.

\vspace{1mm}\noindent
{\bf Community-preserving graph generation models.}
A number of existing random graph models claim to capture
community structure, e.g., {\small {\sf BTER}}~\cite{KPPS14},
{\small {\sf ILFR}}~\cite{PT19}, {\small {\sf SBM}}~\cite{HLL83}
and its variants (e.g., \dcsbm~\cite{KE11} and {\small {\sf DCPPM}}~\cite{N16}).
{\small {\sf BTER}} generates community-preserving social graphs 
given expected node degrees and, for every degree value $\sigma$, 
the average of the clustering coefficients
of the nodes of degree $\sigma$. The model assumes that every community
is a set of $\sigma$ nodes with degree $\sigma$. On the contrary, \cagm{}
makes no assumptions on the community partition received.
Finally, {\small {\sf ILFR}} and the variants of {\small {\sf SBM}}
preserve edge densities at the community level but, unlike our new model,
they do not preserve the clustering coefficients of the original graph.

\section{Preliminaries}
\label{sec:preliminaries}

\subsection{Notation}
\label{ssec:our model}

An attributed graph is represented as a triple $G=(\ca{V},\ca{E},X)$,
where $\ca{V}=\{v_1,v_2,\ldots, v_n\}$ is the set of nodes,
$\ca{E}\subseteq \ca{V}\times \ca{V}$ is the set of edges, and $X$
is a binary matrix called the \emph{attribute matrix}.
The $i$-th row of $X$ is the attribute vector of $v_i$,
which is individually denoted by $\tau(v_i)$.
Every column of $X$ represents a binary feature,
which is set to $1$ (\textbf{true}), or $0$ (\textbf{false}), for each user.
For example, if the $j$-th column represents the attribute ``\emph{owning a car}'',
$X_{ij}=1$ means that the user represented by~$v_i$ owns a car.
Non-binary real-life attributes are assumed to be binarised.
For example, a binarisation of the integer-valued
attribute ``\emph{age}'' is $\{$``\emph{age $\le 16$}'',
``\emph{$17\le$ age $\le 26$}'', ``\emph{$27\le$ age $\le 64$}'',
``\emph{age $\ge 65$}''$\}$. The order of the columns of $X$
is fixed, but arbitrary, and has no impact on the results
described hereafter. Throughout the paper,
we deal with undirected graphs. That is, if $(v_i, v_j)\in \ca{E}$,
then $(v_j,v_i)\in \ca{E}$. Additionally, we use $A$ to denote
the adjacency matrix of the graph.

We use $\ca{C}=\set{C_0,C_1,\ldots, C_p}$, with $C_i\subseteq \ca{V}$
for every $i\in\{0,1,\ldots,p\}$, to represent
a \emph{community partition} of the attributed graph.
As the term suggests, in this paper we assume that
$C_i\cap C_j=\emptyset$, with $0\le i < j \le p$,
and $\displaystyle\cup_{C_i\in\ca{C}}C_i=\ca{V}$.
The community $C_0$ has a special interpretation. Since some community detection
algorithms assign no community to some vertices, we will use $C_0$
as a ``discard'' community of unassigned vertices.
We do so to avoid having a potentially large number of singleton communities,
for which no meaningful co-affiliation statistics can be computed.
We use $\psi_\ca{C}(v_i)$ to denote the community to which the node $v_i$
belongs in the community partition $\ca{C}$.
We will use $\psi(v_i)$ for short in cases where the partition
is clear from the context.

\subsection{Differential Privacy}
\label{ssec:differential privacy}

\emph{Differential privacy}~\cite{D06} is a well studied statistical notion
of privacy. The intuition behind it is to randomise the output of an algorithm
in such a way that the presence of any individual element in the input dataset
has a negligible impact on the probability of observing any particular output.
In other words, a mechanism is $\varepsilon$-differentially private
if for any pair of \emph{neigh\-bouring datasets}, i.e. datasets that
only differ by one element, the probabilities of obtaining any output
are measurably similar.
The amount of similarity is determined by the parameter $\varepsilon$,
which is commonly called the \emph{privacy budget}.
In what follows, we will use the notation $\ca{D}$ for the set
of possible datasets, $\ca{O}$ for the set of possible outputs,
and $D\sim D'$ for a pair of neighbouring datasets.

\begin{definition}[$\varepsilon$-differential privacy \cite{D06}]
A randomised mechanism $\ca{M}\colon\ca{D}\to\ca{O}$ satisfies
$\varepsilon$-differential privacy if for every pair
of neighbouring datasets $D,D'\in\ca{D}$, $D\sim D'$,
and for every $S\subseteq\ca{O}$, we have
\[
\Pr(\ca{M}(D)\in S)
\leq e^\varepsilon \Pr(\ca{M}(D')\in S).
\]
\end{definition}

A number of differentially private mechanisms have been proposed.
For queries of the form $q\colon\ca{D}\to \mathbb{R}^n$,
the most widely used mechanism to enforce differential privacy
is the so-called \emph{Laplace mechanism}, which consists
in obtaining the (non-private) output of $q$ and adding to every component
a carefully chosen amount of random noise, which is drawn from
the Laplace distribution 
$$Lap(\lambda)\colon f(y \mmid\lambda)=\frac{1}{2\lambda}
\exp(\frac{-\mmid y\mmid}{\lambda}),$$
where $y$ is a real-valued variable indicating the noise to be added,
$\lambda=\frac{\Delta_q}{\varepsilon}$
and $\Delta_q$ is a property of the original function $q$
called \emph{global sensitivity}. This property is defined
as the largest difference between the outputs of $q$ for any pair
of neighbouring datasets,
that is $$\Delta_q=\max_{D\sim D'}\left\Vert q(D)-q(D')\right\Vert_1,$$
where $\left\Vert\cdot\right\Vert_1$ is the $L_1$ norm.
For categorical (non-numerical) queries of the form $q\colon\ca{D}\to \ca{O}$,
where $\ca{O}$ is a finite set of categories, the so-called
\emph{exponential mechanism}~\cite{MST07} is the most commonly used.
In this case, for each value $o\in\ca{O}$, a score is assigned
by a function (usually called \emph{scoring function})
quantifying the value's utility, denoted by $u(o,D)$.
The global sensitivity of $u$ is
$$\Delta_u=\max_{o\in\ca{O},D\sim D'}\lvert u(o,D)-u(o,D')\rvert,$$ 
and the randomised output is drawn with probability 
$\frac{\exp(\frac{\varepsilon\cdot u(o,D)}{2\Delta_u})}
{\sum_{o'\in\ca{O}} \exp(\frac{\varepsilon\cdot u(o',D)}{2\Delta_u})}.$

Differentially private methods are composable~\cite{MS10}.
That~is, given a set of algorithms $\{\ca{M}_1, \ca{M}_2, \ldots, \ca{M}_n\}$
such that $\ca{M}_i$ ($1\leq i\leq n$) satisfies
$\varepsilon_i$-differential privacy, if~the algorithms are applied
sequentially and the results combined by a deterministic
method, then the final result satisfies $\sum_i\varepsilon_i$-differential privacy.
If the algorithms are applied independently on disjoint subsets of the
input, then $\max_i\{\varepsilon_i\}$-differential privacy is satisfied.
Moreover, post-processing on the output of an $\varepsilon$-differentially
private algorithm also satisfies $\varepsilon$-differential privacy
if the post-processing is deterministic or randomised with a source of randomness
independent from the noise added to the original algorithm~\cite{KL10}.
These properties allow us to divide a complex computation, such as the set of
model parameters in our case, into a sequence of sub-tasks for which
differentially private methods exist or can be more easily developed.

In addition to the global sensitivity, a dataset-dependent notion,
called \emph{local sensitivity}~\cite{NRS07}, has been enunciated.
The local sensitivity of query $q$ on a dataset $D$ is defined~as 
$$\LS_q(D) =\max_{D\sim D'}\left\Vert q(D)-q(D')\right\Vert_1,$$ that is, 
the maximum difference between the output of $q$ on $D$ and those 
on its neigh\-bouring datasets. 
It is simple to see that \mbox{$\Delta_q=\max_D \LS_q(D)$}. 

\section{The {\sf C-AGM} Model}
\label{sec:C-AGM}

In this section we give the formal definition of \cagm.
We introduce the methods for sampling synthetic graphs from the model,
and describe the methods for learning the model parameters
from an attributed graph.

\subsection{Overview}
\label{ssec:overview}

Algorithm~\ref{alg:dpworkflow} summarises the process by which \cagm{}
is used for publishing synthetic attributed graphs.
As discussed in~\cite{HLMJ09,KL10,KS12}, synthetic graph generation 
is done as a post-processing step of the differentially private computation, 
so the synthetic graphs are also differentially private. 

\begin{algorithm}[!ht]
\caption{Given $G=(\ca{V},\ca{E},X)$,
obtain $t$ differentially private attributed synthetic graphs.
}
\label{alg:dpworkflow}
\SetAlgoLined
Split privacy budget\;
Obtain differentially private community partition\;
Differentially-privately estimate \cagm{} parameters\;
\For{$i\in\{1,2, \ldots, t\}$}{
Sample $X_i$ from \cagm\;
Sample $\ca{E}_i$ from \cagm\;
$G_i\gets(\ca{V},\ca{E}_i,X_i)$
}
\end{algorithm}

The manner in which the privacy budget is split
among the different computations (step~1)
is discussed in Section~\ref{sec:DP-calculation}.
For the differentially private community partition (step~2),
we introduce in this paper an extension of the algorithm 
ModDivisive~\cite{NIR16}. The purpose of this extension is to incorporate 
information from node attributes into the objective function optimised 
by ModDivisive. We discuss the community  partition method in detail
in Section~\ref{ssec:dp-community detection}.
A thorough description of the parameters of \cagm{} is given
in Section~\ref{ssec:C-AGM},
and parameter estimation is discussed in Section~\ref{ssec:calculate C-AGM}.

Once the model parameters have been estimated, we can sample any number
of synthetic attributed graphs from the model, as described in steps~4 to~8
of Algorithm~\ref{alg:dpworkflow}. The differentially private parameter estimation 
methods introduced in this paper use the notion 
of \emph{neighbouring attributed graphs}~\cite{JYC16}, which is discussed 
in detail in the preamble of Section~\ref{sec:DP-calculation}. 
\mbox{Under} this notion, the existence of relations (edges) 
and personal characteristics of the network users (feature vectors)
are treated as sensitive, but vertex identities are not. 
Thus, the synthetic graphs generated by Algorithm~\ref{alg:dpworkflow} 
have the same vertex set as the original graph, whereas the attribute matrix
and the edge set are sampled from the model (step~7).
For every new synthetic attributed graph,
we first sample the attribute matrix, and then this matrix is used,
in combination with an edge generation model (Section~\ref{sssec:CPGM}),
to generate the edge set of the synthetic graph.
There are two reasons for dividing this process
into two steps. The first one is to make the sampling process efficient.
The second reason is to profit from the two-step process to enforce
the intuition that users with similar features are more likely 
to be connected in the social network.
The attributed graph sampling procedure is discussed in detail 
in Section~\ref{ssec:SampleCAGM}. 

\subsection{Model Parameters}
\label{ssec:C-AGM}

As we discussed in Section~\ref{sec:intro}, given an attributed graph $G$
and a community partition $\ca{C}$ of $G$, the purpose of \cagm{}
is to capture a number of properties of $\ca{C}$ that are overlooked
by previously defined models, without sacrificing the ability to capture
global structural properties such as degree distributions
and clustering coefficients. To that end, \cagm{} models the following
properties of the community partition:

\begin{enumerate}
\item the number and sizes of communities;
\item the number of intra-community edges in every \mbox{community};
\item the number of inter-community edges;
\item the distributions of attribute vectors in every community;
\item the distributions of the so-called \emph{attribute-edge correlations}
\cite{JYC16}, for the set of inter-community edges
and for the set of intra-community edges in every community.
\end{enumerate}

Graphs generated by \cagm{} will have the same number
of vertices as the original graph, as well as the same number of communities.
Moreover, every community will have the same cardinality as in the
original graph, and the same number of intra-community edges. The
number of inter-community edges of the generated graph will also be
the same as that of the original graph. Notice that the model preserves
the total number, but not necessarily the pairwise numbers
of inter-community edges for every pair of communities.

\emph{Attribute-edge correlations} were defined in~\cite{JYC16}
as heuristic values for characterising the relation between the feature vectors
labelling a pair of vertices and the likelihood that these vertices
are connected. They encode the intuition that, for example,
co-workers who attended the same university and live near to each other
are more likely to be friends than persons with fewer features in common,
whereas friends are more likely to support the same sports teams
or go to the same bars than unrelated persons. In~\cite{JYC16},
attribute-edge correlations are considered to behave uniformly
over the entire graph. Here, we introduce the rationale
that they behave differently within different communities,
as well as across communities.

A key element in the representation of attribute-edge correlations
is the notion of \emph{aggregator functions}.
An aggregator function $\beta\colon\{0,1\}^k\times\{0,1\}^k\to \ca{B}$
maps a pair of attribute vectors $x,x'$ of dimensionality $k$
into a value in a discrete range $\ca{B}$, which is used as a descriptor,
also called \emph{aggregated feature}, of the pair $(x,x')$.
For example, $\ca{B}$ can contain a set of similarity levels for pairs
of feature vectors, such as \{\emph{low}, \emph{medium}, \emph{high}\},
and $\beta$ can map a pair of vectors whose cosine similarity
is in the interval $[0,0.33]$ to \emph{low}, a pair of vectors
whose cosine similarity is in the interval $[0.67,1]$ to \emph{high}, etc.
Attribute-edge correlations, along with the community-wise distributions
of attribute vectors, are useful for analysts, as they allow to characterise
the members of a community in terms of frequently shared features,
hypothesise explanations for the emergence of a community, etc.

Formally, a \cagm{} model is defined as a quintuple
$\tuple{\ca{V}, \ca{C}, \Theta_M^c, \Theta_X^c, \Theta_F^c}$,
where:
\begin{itemize}
\item {\bf $\ca{V}$} is a set of vertices.
\item {\bf $\ca{C}$} is a community partition of $\ca{V}$.
\item {\bf $\Theta_M^c$} is an instance of an edge set generative model
that preserves properties 1 to 3 of the community partition $\ca{C}$,
as well as degree distributions and clustering coefficients.
The model introduced in this paper is called \cpgm,
and is described in detail in Section~\ref{sssec:CPGM}.
\item {\bf $\Theta_X^c$} is an instance of an attribute vector generative model,
which aims to preserve property 4. The model defines,
for every community $C\in\ca{C}$ and every attribute vector $x$,
the probability $\Pr(\tau(v)=x\mmid v\in C, \Theta_X^c)$
that a vertex in $C_i$ is labelled with $x$.
The model introduced in this paper is described in detail
in Section~\ref{sssec:estimating-thetacX}.
\item {\bf $\Theta_F^c$} is an instance of a generative model for
attribute-edge correlations, which aims to preserve property 5.
This model defines:
\begin{itemize}
\item The discrete range $\ca{B}$ and an aggregator function $\beta$.
\item The probability
$$\Pr(\beta(\tau(v_i),\tau(v_j))=s\mmid\Theta_F^c,
\psi_\CP(v_i)=\psi_\CP(v_j)=C,A_{i,j}\!=\!1)$$
for every community $C\in\ca{C}$ and every value $s\in\ca{B}$.
\item The probability
$$\Pr(\beta(\tau(v_i),\tau(v_j))=s\mmid\Theta_F^c,
\psi_\CP(v_i)\neq\psi_\CP(v_j),A_{i,j}\!=\!1)$$
for every value $s\in\ca{B}$.
\end{itemize}
The instantiations that we propose for these three components
are described in detail in Section~\ref{sssec:estimating-thetaFc}.
\end{itemize}

\subsection{Sampling Attributed Graphs from an Instance of {\sf C-AGM}}
\label{ssec:SampleCAGM}

Given a \cagm{} model
$\ca{G}=\tuple{\ca{V}, \ca{C}, \Theta_M^c, \Theta_X^c, \Theta_F^c}$,
with $\ca{V}=\{v_1,v_2,\ldots,v_n\}$,
an attributed graph $G=(\ca{V},\ca{E},X)$ is sampled from $\ca{G}$
with probability $\Pr(G\mid\ca{G})=\Pr(\ca{E},X\mid\ca{G})$ which,
for the sake of tractability, is approximated as
$$\Pr(\ca{E},X\mmid\Theta_F^c, \Theta_X^c,\CP,\Theta_M^c)=
\Pr(\ca{E}\mmid\Theta_F^c,\Theta_M^c, X,\CP)\cdot \Pr(X\mmid\Theta_X^c, \CP).$$
That is, we first sample from $\Theta_X^c$ the attribute vectors labelling
each vertex and then use them in sampling the edge set.
Again, to keep the sampling process tractable,
we introduce an additional independence assumption, according to which
\[
\Pr(X\mmid\Theta_X^c,\CP)=\prod_{v\in\ca{V}}\Pr(\tau(v)\mmid\psi_\CP(v)).
\]
The computation of the probabilities of the form $\Pr(x\mmid \psi_\CP(v))$
will be discussed in Sect\-ion~\ref{sssec:estimating-thetacX}.
Introducing the assumption that edges are sampled independently
from each other, the probability of generating $\ca{E}$
given $\Theta_F^c$, $\Theta_M^c$, $X$, and $\CP$ is
$$\Pr(\ca{E}\mid\Theta_F^c,\Theta_M^c, X,\CP)=\\
\prod_{v_i,v_j\in\ca{V}}
\Pr(A_{i,j}\mmid \Theta_F^c,\Theta_M^c, {\beta(\tau(x_i), \tau(x_j))},\CP).$$

As it is inefficient to sample edges directly from this
distribution, we adapt the sampling method introduced in~\cite{PMFNG14}
to account for the computation of community-wise separated counts.
Thus, edges are drawn from the distribution
\[
Q(i,j) \propto  Q'_M(i,j)\cdot \Gamma(\beta(\tau(v_i), \tau(v_j)), \CP),
\]
where $Q'_M(i,j)$ is the probability that $(v_i,v_j)$ is drawn
from the edge generation model $\Theta_M^c$, given $\CP$, as a candidate edge;
while $\Gamma(\beta(\tau(v_i),\tau(v_j)), \CP)$
is the probability that it is accepted by $\Theta_F$, given $\CP$.
We split the computation of $\Gamma(\beta(\tau(v_i),\tau(v_j)), \CP)$ into
two cases: $\Gamma_{\itin}(\beta(\tau(v_i),\tau(v_j)), C)$, for every $C\in\CP$
and every $i,j$ such that $\psi_\CP(v_i)=\psi_\CP(v_j)=C$;
and $\Gamma_{\itbtw}(\beta(\tau(v_i),\tau(v_j)))$,
for every $i,j$ such that $\psi_\CP(v_i)\neq\psi_\CP(v_j)$.
Formally, we have $$Q'_M(i,j) = \frac{\Pr(A_{i,j}=1\mid\Theta_M^c,\CP)}
{\sum_{v_p, v_q\in\ca{V}}\Pr(A_{p,q}=1\mid\Theta_M^c,\CP)},$$
\begin{displaymath}
\Gamma_{\itin}(\beta(\tau(v_i),\tau(v_j)), C)=
\frac{R_{\itin}(\beta(\tau(v_i),\tau(v_j)), C)}
{SupR},
\end{displaymath}
\begin{displaymath}
\text{and }\Gamma_{\itbtw}(\beta(\tau(v_i),\tau(v_j)))=
\frac{R_{\itbtw}(\beta(\tau(v_i),\tau(v_j)))}{SupR},
\end{displaymath}
where
$$R_{\itin}(\beta(\tau(v_i),\tau(v_j)),C)=\frac{\Pr(\beta(\tau(v_i),\tau(v_j))
\mid \Theta_F^c,\psi_\CP(v_i)=\psi_\CP(v_j)=C,A_{i,j}=1)}
{\Pr(\beta(\tau(v_i),\tau(v_j))\mid \Theta_M^c,
\psi_\CP(v_i)=\psi_\CP(v_j)=C,A_{i,j}=1)},$$
$$R_{\itbtw}(\beta(\tau(v_i),\tau(v_j)))=
\frac{\Pr(\beta(\tau(v_i),\tau(v_j))\mid \Theta_F^c,
\psi_\CP(v_i)\neq\psi_\CP(v_j),A_{i,j}=1)}
{\Pr(\beta(\tau(v_i),\tau(v_j))\mid \Theta_M^c,
\psi_\CP(v_i)\neq\psi_\CP(v_j),A_{i,j}=1)},$$
\begin{displaymath}
\text{and }SupR=\sup \bigcup_{s\in\ca{B}, C\in\CP}\left(R_{\itin}(s,C)\cup
R_{\itbtw}(s)\right).
\end{displaymath}

The computation of $Q'_M(i,j)$ will be discussed in Section~\ref{sssec:CPGM},
whereas that of\\ $\Gamma_{\itbtw}(\beta(\tau(v_i),\tau(v_j)))$,
and every $\Gamma_{\itin}(\beta(\tau(v_i),\tau(v_j)), C)$
will be discussed in Section~\ref{sssec:estimating-thetaFc}.

Algorithm~\ref{alg:samplingCAGM} describes the procedure to sample
an attributed graph from \cagm.
The method first generates the attribute vectors (line~1).
Then, it pre-computes the acceptance probabilities (lines~2 to~11).
In line~3, the call to {\small {\sf SampleEdgeSet}} consists in the sequential
execution of Algs.~\ref{alg:gencommgraph} and~\ref{alg:gengraphtri},
which will be described in detail in Section~\ref{sssec:CPGM}.
Finally, the loop in lines~12 to~20 repeatedly draws candidate edges
from the edge generation model and adds to the graph those that are accepted
according to the pre-computed probabilities (lines~17 and~18).
The method stops when the required number of edges is added.

\begin{algorithm}[!ht]
\caption{${\sf SampleFromCAGM}(\ca{V}, \CP, \Theta^c_M, \Theta_X^c, \Theta_F^c)$}
\label{alg:samplingCAGM}
\SetAlgoLined
{\footnotesize
$X' \gets {\sf SampleAttributeVectors}(\Theta_X^c)$\;
$Q'_M\gets {\sf ComputeQM}(\Theta_M^c, \CP)$\;
$\ca{E}' \gets {\sf SampleEdgeSet}(Q'_M)$\;
\For{$s\in\ca{B}$}{
	Compute $\Gamma_\itbtw(s)$\;
	\For{$C\in\CP$}{
		Compute $\Gamma_\itin(s,C)$
	}
}
$\ca{E}'\gets\emptyset$\;
\While{$|\ca{E}'|<|\ca{E}|$}{
    $(v,w) \gets {\sf SampleEdge}(Q'_M)$\; 
    $s\gets\beta(\tau(v),\tau(w))$\;
    $u\gets {\sf Uniform}(0,1)$\;
    \If{$(\psi_\CP(v)=\psi_\CP(w)\land u\le\Gamma_\itin(s,\psi_\CP(v))
    \mbox{~\textbf{or}~} (\psi_\CP(v)\neq\psi_\CP(w)\land u\le \Gamma_\itbtw(s)$}{
        $\ca{E}'\gets \ca{E}'\cup \{(v,w)\}$\;
    }
}
return $X',\ca{E}'$\;
}
\end{algorithm}

\subsubsection{Edge generation model}
\label{sssec:CPGM}

As we discussed in Section~\ref{ssec:C-AGM}, the component $\Theta_M^c$
of \mbox{\cagm} is an edge generation model which preserves
several properties of the community partition of the original graph
(properties~1 to~3 listed in Section~\ref{ssec:C-AGM}),
in addition to the degree distribution and clustering coefficients.
We call this model \cpgm, and describe it in what follows.

The model takes as input the set of vertices, as well as the expected number
of neighbours of every vertex $v$ within its community
(that is, its \emph{intra-community degree}, denoted by $d_\itin(v)$)
and the expected number of neighbours outside its community
(that is, the \emph{inter-community degree}, denoted by $d_\itbtw(v)$). 
These values are used to enforce the expected densities within every community
and between communities. Additionally, adapting to our setting
a heuristics introduced in~\cite{JYC16}, the model also requires the number
of triangles having all vertices in one community
(which we call \emph{intra-community triangles} and denote by $n_\triangle^\itin$),
as well as the number of triangles spanning more than one community
(\emph{inter-community triangles}, denoted by $n_\triangle^\itbtw$).
As shown empirically in~\cite{JYC16}, synthetic graphs that preserve
the number of triangles of the original graph are more likely to approximate
the clustering coefficient of the original graph. We adopt this intuition 
as well, but unlike \cite{JYC16}, we separate the counts
of intra- and inter-community triangles. As we will discuss
in Section~\ref{sec:DP-calculation}, $n_\triangle^\itin$ and $n_\triangle^\itbtw$
can be efficiently and accurately computed under differential privacy. 

According to our model, the edge sampling process consists of two steps.
The first step generates a graph that preserves the intra- and inter-community
degrees, but not the number of intra- and inter-community triangles.
Then, the second step iteratively edits the original edge set until
$n_\triangle^\itin$ and $n_\triangle^\itbtw$ are enforced.

At the first step, we follow the idea of the {\small {\sf CL}}~model~\cite{CL02}.
For every pair of vertices $v$ and $w$ satisfying
$\psi_{\mathcal{C}}(v)=\psi_{\mathcal{C}}(w)=C$, the intra-community edge
$(v,w)$ is added with probability
$\pi^\itin_C(v,w)=\frac{d_\itin(v)d_\itin(w)}{2m^\itin_C}$, 
where $m^\itin_C$ is the original number of intra-community edges in~$C$.
That is, intra-community edges are added with a probability proportional
to product of the intra-community degrees of the linked vertices.
If $\psi_{\mathcal{C}}(v)\ne\psi_{\mathcal{C}}(w)$, then the inter-community edge
$(v,w)$ is added with probability
$\pi^\itbtw(v,w)=\frac{d_\itbtw(v)d_\itbtw(w)}{2m^\itbtw}$, where $m^\itbtw$ 
is the total number of inter-community edges in the original graph.
Algorithm~\ref{alg:gencommgraph} describes the first step of the generation process.

\begin{algorithm}[!ht]
\caption{{\sf GenInitialEdgeSet}$(d_\itin, d_\itbtw,\ca{C})$}
\label{alg:gencommgraph}
\SetAlgoLined
{\footnotesize
$\ca{E}\gets\emptyset$\;
\For{$C\in \ca{C}$}{
    $m^\itin_C \gets\frac{1}{2}{\sum_{v\in C}d_\itin(v)}$\;
    $m\gets 0$\;
    \While{$m\leq m^\itin_C$}{
        $(v,w)\gets\mbox{\sf Sample}(\pi_C^\itin)$\;
        \If{$(v,w)\notin \ca{E}$}{
            $\ca{E}\gets \ca{E}\cup \{(v,w)\}$\;
            $m\gets m+1$\;
        }
    }
}
$m^\itbtw\gets\frac{1}{2}{\sum_{v\in\ca{V}}d_\itbtw(v)}$\;
\While{$m\leq\sum_{C\in\ca{C}}m^\itin_C+m^\itbtw$}{
    $(v,w)\gets\mbox{\sf Sample}(\pi^\itbtw)$\;
    \If{$(v,w)\notin \ca{E}$}{
        $\ca{E}\gets \ca{E}\cup \{(v,w)\}$\;
        $m\gets m+1$        \;
    }
}
}
\end{algorithm}

\begin{algorithm}[!ht]
\caption{{\sf GetFinalEdgeSet}$(d_\itin, d_\itbtw, n_\triangle^\itin,$
$n_\triangle^\itbtw, \CP)$}
\label{alg:gengraphtri}
\SetAlgoLined
{\footnotesize
$\mu_\triangle^\itin\gets {\sf CountIntraCommTriangles}(\ca{E}$)\;
\While{$\mu_\triangle^\itin<n_\triangle^\itin$}{
    Uniformly sample $C$ from $\CP$\;
    Sample $v_1$ from $C$ with probability $\frac{d_\itin(v_1)}{2m^\itin_C}$\; 
    Uniformly sample $v_2$ from $\ca{N}_\itin(v_1)$\;
    Uniformly sample $v_3$ from $\ca{N}_\itin(v_2)$\;
    \If{$(v_1,v_3)\not\in\ca{E}\land v_3\neq v_1$}{
        $(v_1', v_2')\gets {\sf GetOldestIntraCommEdge}(\ca{E},\ca{C})$\;
        $n_\itcn^\itprev\gets {\sf GetCommonNeighbour}(v_1', v_2')$\;
        $\ca{E}\gets\ca{E}/\{(v_1', v_2')\}$\;
        $n_\itcn^\itnew\gets {\sf GetCommonNeighbour}(v_1, v_3)$\;
        \eIf{$n_\itcn^\itprev<n_\itcn^\itnew$}{
            $\ca{E}\gets \ca{E}\cup\{(v_1, v_3\}$\;
            $\mu^\itin_\triangle
            \gets \mu^\itin_\triangle- n_\itcn^\itprev+n_\itcn^\itnew$\;
        }{
            $\ca{E}\gets\ca{E}\cup\{(v_1', v_2')\}$\;
        }
    }
}
$\mu_\triangle^\itbtw\gets {\sf CountInterCommTriangles}(\ca{E}$)\;
\While{$\mu_\triangle^\itbtw<n_\triangle^\itbtw$}{
    Sample $v_1$ from $\ca{V}$ with probability $\frac{d_\itbtw(v_1)}{2m^\itbtw}$\; 
    Uniformly sample $v_2$ from $\ca{N}_\itbtw(v_1)$\;
    Uniformly sample $v_3$ from $\ca{N}_\itin(v_2)$\;
    $(v_1', v_2')\gets {\sf GetOldestInterCommEdge}(\ca{E},\ca{C})$\;
    $n_\itcn^\itprev\gets {\sf GetCommonNeighbour}(v_1', v_2')$\;
    $\ca{E}\gets \ca{E}/\{(v_1', v_2')\}$\;
    $n_\itcn^\itnew\gets {\sf GetCommonNeighbour}(v_1, v_3)$\;
    \eIf{$n_\itcn^\itprev<n_\itcn^\itnew$}{
        $\ca{E}\gets \ca{E}\cup\{(v_1, v_3)\}$
        $\mu^\itbtw_\triangle
        \gets \mu^\itbtw_\triangle - n_\itcn^\itprev+n_\itcn^\itnew$\;
    }{
        $\ca{E}\gets\ca{E}\cup\{(v_1', v_2')\}$\;
    }
}
}
\end{algorithm}

At the second step, we use the intuition that the clustering behaviour in
social networks stems from the higher likelihood of users with common friends
to connect~\cite{PFMN12}, thus creating triangles.
Algorithm~\ref{alg:gengraphtri} enforces the values of $n_\triangle^\itin$
and $n_\triangle^\itbtw$ of the original graph on the graph synthesised
by Algorithm~\ref{alg:gencommgraph}. In Algorithm~\ref{alg:gengraphtri}, 
we denote by $\ca{N}_\itin(v)$ the set of neighbours of $v$ in its community, 
that is $\ca{N}_\itin(v)=\{w\;|\;\psi_\CP(v)=\psi_\CP(w)\land (v,w)\in\ca{E}\}$. 
Likewise, we denote by $\ca{N}_\itbtw(v)$ the set of neighbours 
of $v$ in different communities, 
that is $\ca{N}_\itbtw(v)=\{w\;|\;\psi_\CP(v)\ne\psi_\CP(w)\land (v,w)\in\ca{E}\}$. 
In Algorithm~\ref{alg:gengraphtri}, 
$n_\triangle^\itin$ is enforced first because adding or removing 
an intra-community edge may change the number of inter-community triangles as well,
whereas inter-community triangles can be created without modifying the number
of intra-community triangles. 
At every iteration, we sample a new edge. If replacing the oldest intra-community
edge (in terms of the order of creation by Algorithm~\ref{alg:gencommgraph})
with the newly sampled edge causes the number of intra-community triangles
to increase, we make the edge exchange permanent.
Otherwise, we do not add the newly sampled edge and set the oldest edge 
to be the youngest, keeping it in the graph. 
The iteration stops when the number of intra-community triangles is greater than
or equal to that of the original graph. Then, we proceed to enforce
the number of inter-community triangles by adding inter-community edges.
In this case the idea is to find open ``wedges'' composed
of one intra-community edge $(u,v)$
and one inter-community edge $(v,w)$ such that the edge $(u,w)$ has not been
added to the graph. This ensures that newly added edges will not affect
the number of intra-community triangles. Let $(v',w')$ be the oldest
inter-community edge. If the graph obtained by removing $(v',w')$
and adding $(v,w)$ contains more triangles than the current version
of the synthetic graph, then $(v,w)$ is added and $(v',w')$ is removed.
The iteration stops when the number of inter-community triangles is greater than
or equal to that of the original graph.

Due to the removal of initially generated edges, the synthetic graph
may become disconnected. In this case, we apply an edge-swapping
post-processing step to reconnect every small connected component
to the main component (the connected component with the most nodes).
If the post-processing reduces the number of triangles,
we recall Algorithm~\ref{alg:gengraphtri}. The alternation between the post-processing
and Algorithm~\ref{alg:gengraphtri} is not guaranteed to yield a graph having exactly
the required number of triangles, so we stop the iteration
when the total number of triangles in the synthetic graph
is within a $98\%$ tolerance window with respect to the original one.

\subsection{Parameter Estimation for {\sf C-AGM}}
\label{ssec:calculate C-AGM}

We now discuss the methods for estimating the parameters
of a \mbox{\cagm} model from a given attributed graph
with a community partition $\CP$.

\subsubsection{Estimating $\Theta_M^c$}
\label{sssec:estimating-thetaMc}

The estimation of $\Theta_M^c$ reduces to computing the community-wise counters
that it relies on: intra- and inter-community degrees of every vertex,
the number of intra-community triangles for each community and the number
of inter-community triangles. As we mentioned in Section~\ref{sssec:CPGM},
degrees and triangle counts will be used to preserve global structural properties
of the generated graphs such as degree distribution and clustering coefficients.
They can be efficiently computed in the original graph both exactly
and under differential privacy.

\subsubsection{Estimating $\Theta^c_X$}
\label{sssec:estimating-thetacX}

In order to keep the estimation procedure tractable, we introduce the assumption
that attributes are independent. This assumption simplifies the estimation
and handles the sparsity of the attribute vectors when the number
of attributes is large. As seen in~\cite{PMFNG14,JYC16},
not having such an assumption severely limits the number of features
that can be practically handled.
Furthermore, as we will see in Section~\ref{sec:DP-calculation},
in addition to tractability, this assumption will also allow us to limit
the amount of noise added by the differentially private computation.

We will denote by $x_{\ell}$ be the value for the $\ell$-th component
of the attribute of vector $x$. Likewise, we will denote by $\tau_\ell(v)$
the value of the $\ell$-th component of the vector labelling vertex~$v$.
We estimate the probability that a node $v$ is labelled
with an attribute vector $x$ by the following formula:
\[
\Pr(x\mmid v,\Theta_X^c,\CP) =\Pr(x\mmid\psi_\ca{C}(v),\Theta_X^c)=
\prod_{\ell=1}^k \mbox{Pr}_\ell(x_{\ell}\mid\Theta^c_X,\psi_\ca{C}(v)),
\]
where $k$ is the number of columns of $X$ (ergo the cardinality
of all attribute vectors) and $\Pr_\ell(x_{\ell}\mid\Theta_X^c,\psi_\CP(v)) =
\frac{\mid\{v'\in\psi_\CP(v)\mid
\tau_\ell(v')=x_{\ell}\}\mid}{\mid\psi_\CP(v)\mid}$.

\subsubsection{Estimating $\Theta_F^c$}
\label{sssec:estimating-thetaFc}

As we discussed in Section~\ref{ssec:C-AGM}, for defining $\Theta_F^c$
it is necessary to define an aggregator function for pairs of attribute vectors.
Our aggregator function is based on the widely used cosine similarity,
that is, the cosine of the angle between two vectors.
Since the range $\ca{B}$ of aggregator functions needs to be discrete,
we split the range $[0,1]$ of the cosine similarity into a set of intervals,
determined by a parameter $\delta$ satisfying $0<\delta\le 1$.
Let $s_{\it cos}(x, x')$ denote the similarity between vectors $x$ and $x'$.
Our aggregator function is defined
as $\beta(x,x') = \left\lfloor\frac{s_{\it cos}(x,x')}{\delta}\right\rfloor$.
Note that, according to this definition,
$\ca{B}=\{\lfloor\frac{s}{\delta}\rfloor\mid s\in[0,1]\}$.
Finally, the probability of the attribute vectors of a pair of connected vertices
being described by an aggregated feature~$u\in\ca{B}$ is computed as 
\[
%\begin{medsize}
\Pr(\beta(\tau(v_i), \tau(v_j))=u\mmid\Theta_F^c,\CP, A_{i,j}=1)=
\begin{cases}
\frac{\mid\{ (v_p,v_q)\in\ca{E}\mid
\beta(\tau(v_p),\tau(v_q))=u\land\psi(v_p)=\psi(v_q)=
\psi(v_i)\}\mid}{\mid\{(v_p, v_q)\in\ca{E}\mid\psi(v_p)
=\psi(v_q)=\psi(v_i)\}\mid}\\
\hfill\mbox{\bf if~} \psi(v_i)=\psi(v_j);\\
\frac{\mid\{(v_p,v_q)\in\ca{E}\mid
\beta(\tau(v_p),\tau(v_q))=u\land \psi(v_p)\neq\psi(v_q)\}\mid}
{\mid\{(v_p, v_q)\in\ca{E}\mid\psi(v_p)\neq\psi(v_q)\mid}\\
\hfill\mbox{\bf if~}\psi(v_i)\neq\psi(v_j).
\end{cases}
%\end{medsize}
\]

Compared to the approach introduced in~\cite{PMFNG14,JYC16}, 
our method uses a coarser granularity for aggregated features.
Thanks to that, it avoids the need to compute $2^{2k}$ different values, 
which is not only inefficient, but also results in an excessive amount 
of noise injected when applying differential privacy.

\section{Differentially Private {\sf C-AGM}}
\label{sec:DP-calculation}

In this section, we describe in detail our mechanisms for obtaining
differentially private instances of the \cagm{} model,
as well as the necessary adaptations of the sampling methods
when the model has been computed under differential privacy. 
As we discussed in Section~\ref{sec:preliminaries}, the difference 
between different instantiations of differential privacy for graphs 
lies in the definition of the pairs of graphs that are considered 
to be neighbouring datasets. Here, we adopt the following definition 
from~\cite{JYC16}. 

\begin{definition}[Neighbouring attributed graphs \cite{JYC16}]
\label{def-neigh-attr}
A pair of \mbox{attributed} graphs $G=(\ca{V},\ca{E},X)$ 
and $G'=(\ca{V},\ca{E}',\ca{X}')$ 
are \emph{neighbouring}, denoted $G\sim_{at}G'$, if and only if
they differ in the presence of exactly one edge or the attribute vector
of exactly one node. That~is,
$$G\sim_{at}G'\iff\lvert\ca{E}\nabla\ca{E}'\rvert=\! 1\ \lor 
(\exists_{v\in\ca{V}}\
\tau_G(v)\neq\tau_{G'}(v) \land \forall_{v'\in\ca{V}\setminus\{v\}}\
\tau_G(v)=\tau_{G'}(v)
).$$
\end{definition}

Definition~\ref{def-neigh-attr} entails that the existence 
of relations, that is the occurrence of edges, and the attributes 
describing every particular user, are treated as sensitive.  
On the contrary, vertex identifiers are treated as non-private.  
These criteria are in line with the current privacy policies 
of most social networking sites, where the fact that a profile exists 
is public information, but users can keep their personal information 
and friends list private or hidden from the general public. 
With Definition~\ref{def-neigh-attr} in mind, we describe in what follows
the differentially private computation of every parameter of \cagm.

\subsection{Obtaining the Community Partition}
\label{ssec:dp-community detection}

Our differentially private community partition method extends 
the algorithm ModDivisive~\cite{NIR16}, in such a way 
that it takes node attributes into account.
In its original formulation, \mbox{ModDivisive} searches for a community partition 
that maximises \emph{modularity}, a structural parameter encoding 
the intuition that a user tends to be more connected to users 
in the same community than to users in other communities \cite{NG04}. 
Modularity is defined as 
$$\sum_{C\in\ca{C}}\left(\frac{\ell_C}{m}-\left(\frac{d_C}{2m}\right)^2\right),$$ 
where $\ell_C$ is the number of edges between the nodes in $C$
and $d_C$ is the sum of degrees of the nodes in $C$.
ModDivisive uses the exponential mechanism, considering
the set of possible partitions as the categorical co-domain,
and using modularity as the scoring function.

In order to integrate node features into ModDivisive, we introduce 
a new objective function that combines the original modularity 
with an attribute-based quality criterion. The new objective function 
is defined as 
$$Q(\ca{C}) = w_s\cdot Q_s(\ca{C}) + w_a\cdot Q_a(\ca{C}),$$
where $w_s\in[0,1]$, $w_a=1-w_s$, $Q_s(\ca{C})$ is the modularity
of the original graph and $Q_a(\ca{C})$ is the modularity of an auxiliary graph
obtained from the original as follows. First, we take the vertex set
of the original graph. Then, we compute all pair\-wise similarities
between their associated feature vectors.
Similarities are computed using the cosine measure
(as done in Section~\ref{sssec:estimating-thetaFc}
for computing aggregated attributes, but without applying
the discretisation). Finally, we add to the auxiliary graph
the edges corresponding
to the $\left\lceil \frac{n(n-1)}{20}\right\rceil$
most similar attributed node pairs.

It is proven in~\cite{NIR16} that the global sensitivity of $Q_s(\ca{C})$
is upper bounded by $\frac{3}{m}$, where $m$ is the minimum
number of edges of all potential graphs to publish.
In the worst case, $\Delta_{Q_s(\ca{C})}=3$,
considering that the original graph is an arbitrary non-empty graph.
However, this is not the case for real-life social graphs,
so introducing more realistic assumptions about the value of $m$ allows us 
to use smaller values of $\Delta_{Q_s(\ca{C})}$ and thus reduce 
the amount of noise added in differentially privately computing $Q_s(\ca{C})$. 
Throughout this paper, we assume $m=10,000$, 
which leads to $\Delta_{Q_s(\ca{C})}=0.0003$. As we will see
in Section~\ref{sec:experiment}, all datasets used in our experiments
comply with this assumption. In what follows, we apply an analogous reasoning
for bounding $\Delta_{Q_a(\ca{C})}$. 

\begin{proposition}
\label{prop-sens-attr-modularity}
Every graph $G$ of order $n$ satisfies $LS_{Q_a(\ca{C})}(G)~\le~\frac{60}{n}.$
\end{proposition}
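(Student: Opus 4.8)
The plan is to mimic the structure of the argument used in \cite{NIR16} for bounding the sensitivity of $Q_s(\ca{C})$, but applied to the auxiliary graph $G_a$ whose edge set consists of the $\lceil n(n-1)/20\rceil$ most similar attributed node pairs. The first thing to observe is that, for the notion of neighbouring attributed graphs used here (Definition~\ref{def-neigh-attr}), a pair $G\sim_{at}G'$ differs either in one edge of $\ca{E}$ or in the attribute vector of one node. Since $Q_a(\ca{C})$ depends only on the attribute matrix $X$ (not on $\ca{E}$), the edge-difference case contributes nothing: if $G$ and $G'$ differ only in one edge, then $G_a=G_a'$ and $Q_a(\ca{C})$ is unchanged. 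Hence it suffices to bound the change in $Q_a(\ca{C})$ when the attribute vector of a single vertex $v$ is modified.

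Next I would quantify how much $G_a$ can change when $\tau(v)$ is replaced. Changing $\tau(v)$ only affects the $n-1$ cosine similarities $s_{\it cos}(\tau(v),\tau(w))$ for $w\neq v$; all other $\binom{n}{2}-(n-1)$ pairwise similarities are untouched. The auxiliary graph keeps a fixed number $\lceil n(n-1)/20\rceil$ of edges (the top-ranked pairs), so the symmetric difference between the old and new auxiliary edge sets is supported entirely on pairs incident to $v$: an edge can enter or leave $G_a$ only if it involves $v$, or if it is displaced from/into the ``top'' list by a pair involving $v$. A careful accounting shows that at most $n-1$ edges incident to $v$ can be added or removed, and each such addition/removal of an edge incident to $v$ can be compensated by the removal/addition of at most one other (non-incident) edge that crosses the threshold; so $|\ca{E}_a\,\nabla\,\ca{E}_a'|$ is $O(n)$, with the constant tracked so that the final bound comes out to $60/n$. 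The key quantitative input is the modularity-sensitivity lemma of \cite{NIR16}: adding or removing a single edge changes modularity by at most $3/m_a$, where $m_a=\lceil n(n-1)/20\rceil$ is the (fixed) number of edges in the auxiliary graph. Composing the edge-change count with this per-edge bound gives $LS_{Q_a(\ca{C})}(G)\le (\text{number of auxiliary-edge changes})\cdot \tfrac{3}{m_a}$.

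Putting the pieces together: with $m_a=\lceil n(n-1)/20\rceil \ge n(n-1)/20$, we get $\tfrac{3}{m_a}\le \tfrac{60}{n(n-1)}$, and with the $O(n)$ bound on the number of changed auxiliary edges being at most $n-1$ (the similarities affected by $v$), the product is at most $(n-1)\cdot \tfrac{60}{n(n-1)} = \tfrac{60}{n}$, as claimed. (Here I am relying on the fact that $Q_a$ is evaluated on the \emph{same} partition $\ca{C}$ on both sides, so the only variation is through the auxiliary edge set; the partition itself is the categorical object ranged over by the exponential mechanism, not part of the dataset.)

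The main obstacle I anticipate is the combinatorial bookkeeping in the second step: carefully arguing that a single attribute-vector change can displace at most $n-1$ edges in the fixed-size ``top similarities'' list, and that each displacement is a clean swap of one edge for another (rather than a cascade). One has to be careful about ties in the similarity ranking and about the ceiling in $\lceil n(n-1)/20\rceil$; the cleanest way is probably to bound the symmetric difference directly by the number of similarities that change, namely $n-1$, since only those can alter membership in the top list, and any edge leaving the list is replaced by exactly one edge entering it (the list size being constant). Once that bound is in hand, invoking the $3/m_a$ per-edge modularity sensitivity from \cite{NIR16} and simplifying is routine.
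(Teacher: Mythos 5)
Your proposal follows essentially the same route as the paper's proof: dismiss the edge-difference case because $Q_a$ depends only on the attribute matrix, observe that changing $\tau(v)$ affects only the $n-1$ pairs incident to $v$ in the auxiliary graph, invoke the per-edge modularity bound $\frac{3}{m}$ from \cite{NIR16}, and combine $\frac{3(n-1)}{m_a}$ with $m_a\ge\frac{n(n-1)}{20}$ to obtain $\frac{60}{n}$. The displacement bookkeeping you flag as the main obstacle is not carried out in the paper either --- its worst case is simply all $n-1$ pairs incident to $v$ entering or leaving $G_a$ --- so your argument matches the published one in both structure and level of detail.
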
 

\begin{proof}
Let $G\attrneigh G'$ be two neighbouring attributed graphs
and let $G_a$ and $G_a'$ be the auxiliary graphs obtained from $G$ and $G'$,
respectively. If the difference between $G$ and $G'$ consists only in one edge,
then $G_a=G_a'$, so in what follows we will consider that $G$ and $G'$
differ in one attribute vector. Let $v$ be the (sole) vertex
such that $\tau_G(v)\ne\tau_{G'}(v)$. In the worst case, we have that,
for every $w\in \ca{V}\setminus\{v\}$, $(v,w)\in G_a$ and $(v,w)\notin G'_a$
(or \emph{vice versa}).
It was shown in~\cite{NIR16} that the modularities of two graphs differing
in one edge differ in up to $\frac{3}{m}$, where $m$ is the minimum
number of edges. Then, in the worst case 
we have $\LS_{Q_a}(G)\le\frac{3(n-1)}{m_a}$,
where $n$ is the order of $G$ and $G'$, and $m_a$ is the minimum number of edges
in auxiliary graphs. As we discussed in Sect.~\ref{ssec:dp-community detection},
$m_a\ge \frac{n(n-1)}{20}$, so $\LS_{Q_a}(G)\le\frac{60}{n}$.
The proof is thus completed.
\end{proof}

Combining the result in~\cite{NIR16}
with that of Proposition~\ref{prop-sens-attr-modularity},
we conclude that $LS_{Q(\CP)}(G)\le 0.0003\cdot w_s +
\frac{60}{\lvert\ca{V}(G)\rvert}\cdot w_a$ for every $G$ satisfying 
the aforementioned assumptions, and use this value as an upper bound 
for $\Delta_{Q_a(\ca{C})}$.

\subsection{Attribute Vector Distribution}

As discussed in Section~\ref{ssec:calculate C-AGM}, 
given a community partition $\CP$, in order to obtain the differentially
private estimation of $\Theta_X^c$ (denoted by $\olTheta_X^c$),
we need to compute the probability distribution of each attribute
for every community, i.e.,
$\Pr_\ell(\tau_\ell(v)\mmid\overline{\Theta}_X^c, v\in C)$,
for each $\ell\le k$ (where $k$ is the number of attributes)
and $C\in\CP$. Computing this probability reduces to computing the number
of nodes whose $\ell$-th attribute has value $1$,
which we denote by $n_X^{\ell,C}$.
Let $n_X^C$ be the sequence $\left(n_X^{1,C}, n_X^{2,C}, \ldots,n_X^{k,C}\right)$.
In order to obtain the differentially private sequence
$\overline{n}^C_X=\left(\overline{n}_X^{1,C}, \overline{n}_X^{2,C},
\ldots,\overline{n}_X^{k,C}\right)$, we add to each element in $n_X^C$
noise sampled from $Lap\left(\frac{k}{\varepsilon_X}\right)$,
where $\varepsilon_X$ is the privacy budget reserved for this computation
and $k$ is the global sensitivity of $n_X^C$, as shown in the next result.

\begin{proposition}
\label{prop-sens-Theta-X}
The global sensitivity of
$n_X^C=\left(n_X^{1,C}, n_X^{2,C},\ldots,n_X^{k,C}\right)$
is $k$.
\end{proposition}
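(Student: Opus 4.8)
The plan is to analyze how the vector $n_X^C$ changes when we move between neighbouring attributed graphs, and show the $L_1$ difference is never more than $k$.

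First I would recall that, by Definition~\ref{def-neigh-attr}, two neighbouring attributed graphs $G \sim_{at} G'$ differ either in exactly one edge or in the attribute vector of exactly one vertex $v$. Since $n_X^{\ell,C}$ counts vertices in $C$ whose $\ell$-th attribute equals $1$, and this count depends only on the attribute matrix $X$ and the partition $\CP$ (not on the edge set), the edge-difference case contributes nothing: $n_X^C(G) = n_X^C(G')$ in that case. So the only case to analyze is when $G$ and $G'$ differ in the attribute vector of a single vertex $v$.

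Next I would observe that if $\psi_\CP(v) \ne C$, then again $n_X^C$ is unchanged, so we may assume $v \in C$. Changing $\tau(v)$ from one binary vector to another can flip each of the $k$ coordinates; for coordinate $\ell$, the count $n_X^{\ell,C}$ changes by at most $1$ (it increases by $1$ if the $\ell$-th bit goes from $0$ to $1$, decreases by $1$ if it goes from $1$ to $0$, and is unchanged otherwise). Summing over all $k$ coordinates, $\lVert n_X^C(G) - n_X^C(G') \rVert_1 \le k$. Taking the maximum over all neighbouring pairs gives $\Delta_{n_X^C} \le k$.

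Finally I would argue the bound is tight (so the global sensitivity equals $k$, not merely is bounded by it): take any community $C$ that is nonempty, pick $v \in C$, and let $G$ have $\tau(v) = (0,0,\ldots,0)$ and $G'$ be identical except $\tau_{G'}(v) = (1,1,\ldots,1)$; these are neighbouring, and every coordinate count differs by exactly $1$, giving $L_1$ distance exactly $k$. There is no real obstacle here — the only thing to be slightly careful about is making explicit that the two sub-cases of the neighbouring relation (edge difference vs.\ attribute difference) are handled separately, and that a coordinate-wise argument suffices because the $L_1$ norm decomposes as a sum over coordinates.
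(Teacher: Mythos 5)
Your proof is correct and follows essentially the same route as the paper's: split the neighbouring relation into the edge-difference case (no change), the attribute-change case with $v\notin C$ (no change), and the case $v\in C$, where each flipped coordinate changes $n_X^{\ell,C}$ by exactly $1$, giving an $L_1$ difference of at most $k$. Your explicit tightness witness (flipping $\tau(v)$ from all zeros to all ones) is a welcome addition that the paper leaves implicit when it asserts equality rather than just the upper bound.
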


\begin{proof}
Let $G\attrneigh G'$ be two neighbouring attributed graphs, let $C\subseteq\ca{V}$
be a community and let $n_X^C(G)$ and $n_X^C(G')$ be the instances
of $n_X^C$ in $G$ and $G'$, respectively.
If the difference between $G$ and $G'$ consists only in one edge,
then $n_X^C(G)=n_X^C(G')$, so in what follows we will consider that $G$ and $G'$
differ in one attribute vector. Let $v$ be the (sole) vertex
such that $\tau_G(v)\ne\tau_{G'}(v)$. If $v\notin C$, then $n_X^C(G)=n_X^C(G')$.
On the contrary, if $v\in C$, for every component $\ell\in\{1,\ldots,k\}$
such that $\tau_{\ell,G}(v)\neq\tau_{\ell,G'}(v)$,
we have that $\big\lvert n_X^{\ell,C}(G)-n_X^{\ell,C}(G')\big\rvert=1$. 
In consequence, we have $\Delta_{n_X^C}=\displaystyle\max_{G\attrneigh G'}
\left\Vert n_X^C(G)-n_X^C(G')\right\Vert_1=k$.
\end{proof}

\subsection{Attribute-Edge Correlations}
\label{ssec:dp-att-edge-corr}

Recall that the aggregator function $\beta$ defined
in Section~\ref{sssec:estimating-thetaFc} maps every pair of attribute vectors
to a non-negative integer
in $\ca{B}=\{\lfloor\frac{s}{\delta}\rfloor\mid s\in[0,1]\}$,
for some $\delta\in [0, 1]$.
In order to estimate $\Theta_F^c$, we need to count, for each possible output
of $\beta$, the number of edges whose end-nodes are mapped to this value.
For every $t\in\ca{B}$ and every $C\in\CP$,
let $n_F^{t,C}$ be the number
of intra-community edges $(v, v')$ in $C$ such that $\beta(\tau(v), \tau(v'))=t$.
Likewise, let $n_{F\itbtw}^{\ell}$ be the number of inter-community edges
$(v,v')$ such that
$\beta(\tau(v), \tau(v'))=t$.
Thus, in order to compute $\olTheta_F^c$, we need to differentially privately
compute $n^{t, C}_F$ for every $t\in\ca{B}$ and every $C\in\CP$,
as well as $n_{F\itbtw}^t$ for every $t\in\ca{B}$.
We denote by $\oln_F^{t, C}$ and $\oln_{F\itbtw}^t$ the corresponding
differentially private values.

The global sensitivity of each of these sequences is $2(|\ca{V}|-2)$,
which is unbounded. To overcome this problem, we follow an approach analogous
to the one used in~\cite{JYC16} for counting attribute-edge correlations
in the entire graph. The method, introduced in~\cite{BBDS13},
consists in truncating the edge set of the graph to ensure that the degree
of all nodes is at most~$p$, which in the case of attribute-edge
correlations ensures that the global sensitivity is $2p$~\cite{JYC16,BBDS13}.
In consequence, for every $C\in\CP$, we obtain $\oln_F^{t,C}$ from $n_F^{t,C}$
by adding noise sampled from $Lap(\frac{2k}{\varepsilon_F})$,
where where $\varepsilon_F$ is the privacy budget reserved for this computation.
Likewise, we obtain $\oln_{F\itbtw}^t$ from $n_{F\itbtw}^t$
by adding noise sampled from $Lap(\frac{2k}{\varepsilon_F})$. 

\subsection{{\sf CPGM} Parameters}

In what follows we describe the computation of the parameters of \cpgm,
namely the set of intra-community and inter-community degrees and triangle counts.

\vspace{2mm}\noindent
{\bf Intra- and inter-community degrees.}
Following the trend of previous differentially private degree sequence
computation methods \cite{KS12,HLMJ09},
we first add noise to the raw values and then apply a post-processing
on the perturbed degree sequences to restore certain properties
of the original sequence, namely graphicality and the order of the nodes
in terms of their degrees, as well as certain community-specific properties.

Let $d_\itin^C=(d_\itin^{1,C},d_\itin^{2,C}\ldots, d_\itin^{m, C})$,
where $m=|C|$ and $d_\itin^{i,C}\le d_\itin^{i+1,C}$ ($1\le i < |C|$),
be the list of non-decreasingly ordered original intra-community degrees
in $C\in\CP$. Analogously,
let $d_\itbtw^C=(d_\itbtw^{1,C}, d_\itbtw^{2,C},\ldots, d_\itbtw^{m,C})$
be the sequence of inter-community degrees of nodes in $C$.

The global sensitivity of the degree sequence of the entire graph
is $2$, as adding or removing one edge changes the degrees of exactly
two nodes by $1$~\cite{HLMJ09}. The same is true for every $d_\itin^C$
and $d_\itbtw^C$, since the degrees of at most two intra-community nodes
(or at most one node in $C$ and one node outside of $C$) change by $1$.
Thus, for every $C\in\overline{\CP}$, we obtain from $d_\itin^C$
the differentially private sequence $\overline{d}_C$ by adding noise sampled
from $Lap(\frac{2}{\varepsilon_d})$ to every degree value. Similarly,
we obtain from $d_\itbtw^C$ the differentially private sequence
$\overline{d}_\itbtw^{i,C}$.
Afterwards, the noisy sequences are post-processed to restore three properties:
(i) the non-decreasing order between the intra-community degrees inside every
community, (ii) the graphicality of the intra-community degrees of every
community, and (iii) the graphicality of the inter-community degrees of all nodes
in the graph. Property (i) is enforced using the method proposed
in~\cite{HLMJ09}, whereas properties (ii) and (iii) are enforced
using the method proposed in~\cite{KS12}.

\vspace{2mm}\noindent
{\bf Numbers of intra- and inter-community triangles.}
The global sensitivity of the number of triangles in a graph 
is proven in~\cite{NRS07} to be $n-2$, where $n$ is the number of vertices.
The following result characterises the global sensitivity 
of the number of intra-community triangles.

\begin{proposition}
\label{prop-sens-no-intra-comm-triangles}
The global sensitivity of the number of intra-community triangles
of a graph $G$ with a community partition $\CP$
is $\Delta_\nintratriangle=\max_{C\in\CP}\{|C|-2\}$.
\end{proposition}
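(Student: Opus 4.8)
The plan is to mirror the structure of the proof that the global sensitivity of the triangle count of a whole graph is $n-2$ (from~\cite{NRS07}), but carried out community by community. First I would reduce to the relevant case: if two neighbouring attributed graphs $G\attrneigh G'$ differ in an attribute vector, then the edge set is unchanged, so the number of intra-community triangles is the same and the contribution to the sensitivity is zero. Hence I only need to bound $|\nintratriangle(G)-\nintratriangle(G')|$ when $\ca{E}\nabla\ca{E}'=\{(u,w)\}$ for a single edge $(u,w)$.

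Next I would observe that a triangle is an \emph{intra-community} triangle precisely when all three of its vertices lie in the same community $C\in\CP$. Adding or removing the edge $(u,w)$ can only create or destroy a triangle that uses this edge, and such a triangle has the form $\{u,w,z\}$ for some common neighbour $z$ of $u$ and $w$. For this triangle to be intra-community we need $u,w,z$ all in the same community, which in particular forces $\psi_\CP(u)=\psi_\CP(w)=C$ for some single community $C$; moreover $z$ must also lie in $C$. So if $u$ and $w$ are in different communities, the edge flip changes no intra-community triangle at all, and the difference is $0$. If $u,w\in C$, then the only intra-community triangles affected are those of the form $\{u,w,z\}$ with $z\in C\setminus\{u,w\}$, of which there are at most $|C|-2$. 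Therefore $|\nintratriangle(G)-\nintratriangle(G')|\le |C|-2 \le \max_{C\in\CP}\{|C|-2\}$ in every case, giving the upper bound $\Delta_\nintratriangle\le\max_{C\in\CP}\{|C|-2\}$.

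For the matching lower bound I would exhibit a worst-case pair: let $C^\star$ be a community attaining the maximum size, pick two vertices $u,w\in C^\star$, make every other vertex of $C^\star$ adjacent to both $u$ and $w$ (and include whatever other edges are needed so the degree sequence is realizable), and let $G$ contain the edge $(u,w)$ while $G'$ does not. Then exactly $|C^\star|-2$ intra-community triangles are present in $G$ and absent in $G'$, so the sensitivity is at least $|C^\star|-2=\max_{C\in\CP}\{|C|-2\}$. Combining the two bounds yields equality.

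I do not anticipate a serious obstacle here; the argument is essentially a bookkeeping refinement of the classical triangle-sensitivity bound. The one point that needs a little care is making precise that the community partition $\CP$ is held fixed across $G$ and $G'$ (which is consistent with Definition~\ref{def-neigh-attr}, where neighbouring graphs share the same vertex set and the partition is a function of the vertex set together with the graph, but in this computation is treated as given), so that ``intra-community'' means the same thing in both graphs; once that is granted, the case split on whether $u$ and $w$ share a community does all the work.
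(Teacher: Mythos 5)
Your proposal is correct and follows essentially the same route as the paper's proof: reduce to the single-edge-difference case, split on whether the two endpoints share a community, and bound the change by the number of common neighbours inside that community, which is at most $|C|-2$. The only addition is your explicit worst-case construction certifying the lower bound, which the paper leaves implicit when asserting equality, so this is a harmless (indeed slightly more complete) elaboration rather than a different approach.
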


\begin{proof}
Let $G\attrneigh G'$ be two neighbouring attributed graphs
and let $\CP$ be a community partition of $\ca{V}$.
Let $n_\triangle^{\itin}(G)$ and $n_\triangle^{\itin}(G')$ be the numbers
of intra-community triangles of $G$ and $G'$, respectively.
If the difference between $G$ and $G'$ consists only in one attribute vector,
then $n_\triangle^{\itin}(G)=n_\triangle^{\itin}(G')$, so in what follows
we will consider that $G$ and $G'$ differ in one edge. We will assume,
without loss of generality, that $\ca{E'}\setminus\ca{E}=(v,v')$.
Two cases are possible for $\psi_\CP(v)$ and $\psi_\CP(v')$:
\begin{enumerate}[{\rm (i)}]
\item $\psi_\CP(v)\neq\psi_\CP(v')$. In this case, since $(v,v')$
is an inter-community edge, $n_\triangle^{\itin}(G)=n_\triangle^{\itin}(G')$.
\item $\psi_\CP(v)=\psi_\CP(v')=C$. In this case, we have that
$n_\triangle^{\itin}(G')-n_\triangle^{\itin}(G)=
|C\cap\ca{N}_G(v)\cap\ca{N}_G(v')|$,
that is the number of common neighbours of $v$ and $v'$ in the same community.
\end{enumerate}
It is simple to see that every pair of vertices $v$ and $v'$
such that $\psi_\CP(v)=\psi_\CP(v')=C$ satisfy
$|C\cap\ca{N}_G(v)\cap\ca{N}_G(v')|\le |C|-2$.
Hence, $\Delta_{n_\triangle^{\itin}}=\max_{C\in\CP}\{|C|-2\}$.
\end{proof}

Since the global sensitivity of triangle count queries is unbounded, 
the Laplace mechanism cannot be applied in this case. 
An accurate differentially private method for counting the number of triangles
of a graph is presented in~\cite{ZCPSX15}. This method uses the exponential
mechanism. It interprets the triangle count query as a categorical query, 
whose co-domain is a partition $\ca{O}$ of $\mathbb{Z}^+$. One of the elements 
of $\ca{O}$ is a singleton set composed exclusively of the correct output 
of the query (the correct number of triangles in this case), 
whereas every other element contains a set of incorrect values 
which are treated as equally useful. They define the notion 
of \emph{ladder function}, which is used as a scoring function 
on the elements of $\ca{O}$. A~ladder function gives better scores 
to the sets of values that are closer to the correct query answer. 
In order to differentially privately compute the number of triangles 
of a graph $G$, the ladder function approach starts by obtaining 
the correct number of triangles. Then, the ladder function is built, 
and a set $O\in\ca{O}$ is sampled following the exponential mechanism. 
Finally, a random element of $O$ is given as the differentially private 
output of the query~\cite{ZCPSX15}. 

It is shown in~\cite{ZCPSX15} that the best ladder function, 
in the sense that it adds the minimum necessary amount of noise, 
is the so-called \emph{local sensitivity at distance $t$}~\cite{NRS07}, 
which is denoted as $\LS_q(G,t)$, and is defined as the maximum local sensitivity 
of the query $q$ among all the graphs at edge-edit distance at most~$t$ from $G$. 
Formally, 
$\LS_q(G,t)=\max_{\{G'\;|\;\phi(G,G')\le t\}}\LS_q(G')$, where $\phi(G,G')$ 
is the edge-edit distance between $G$ and $G'$, that is the minimum number 
of edge additions and removals that transform $G$ into $G'$. 
It is also shown in~\cite{NRS07,ZCPSX15} that 
$\LS_q(G,t)=\max_{1\le i<j\le|\ca{V}|}\LS^q_{ij}(G,t)$, 
where $\LS^q_{ij}(G,t)=\max_{\{G',G''\;|\; \phi(G,G')\le t, 
G'\sim_{ij}G''} \left\vert q(G')-q(G'')\right\vert$ 
and $G'\sim_{ij}G''$ indicates that $G'$ and $G''$ differ in exactly 
the addition or removal of $(v_i,v_j)$. 

Here, we apply the ladder function approach for computing the number 
of intra-community triangles. To that end, 
we characterise the function $\LS_{\nintratriangle}((G,\CP),t)$
for every graph $G$ with a community partition $\CP$. 

\begin{proposition}
\label{prop-local-sens-intra-triangles}
For every graph $G$, every community partition $\CP$ of $G$, 
and every positive integer $t\ge 1$, 
$$\LS_{\nintratriangle}((G,\CP),t)=\max_{\{i,j\;|\;\psi_\CP(v_i)=\psi_\CP(v_j)\}}
\left\{\min\left\{
a_{ij}+\left\lfloor\frac{t+\min\{t,b_{ij}\}}{2}\right\rfloor,|\psi_\CP(v_i)|-2
\right\}\right\},$$
where
$a_{ij}=\lvert\{v_\ell\in\psi_{\CP}(v_i)\mid
A_{i,\ell}=1 \land A_{j,\ell}=1\}\rvert$ and
$b_{ij}=\lvert\{v_\ell\in\psi_{\CP}(v_i) \mid
A_{i,\ell}\oplus A_{j,\ell} =1\}\rvert$.
\end{proposition}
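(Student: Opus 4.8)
The plan is to compute $\LS_{\nintratriangle}((G,\CP),t)$ by unwinding the nested-maximum definition $\LS_q(G,t)=\max_{i<j}\LS^q_{ij}(G,t)$ and analysing a single pair $(v_i,v_j)$ with $\psi_\CP(v_i)=\psi_\CP(v_j)=C$. Fix such a pair. By Proposition~\ref{prop-sens-no-intra-comm-triangles} and its proof, for any graph $G'$ the one-edge sensitivity of $\nintratriangle$ with respect to the edge $(v_i,v_j)$ is exactly the number of common neighbours of $v_i$ and $v_j$ inside $C$, capped at $|C|-2$. So $\LS^{\nintratriangle}_{ij}(G,t)$ is the maximum, over all $G'$ reachable from $G$ by at most $t$ edge edits, of $\min\{|C\cap\ca{N}_{G'}(v_i)\cap\ca{N}_{G'}(v_j)|,\,|C|-2\}$. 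The task reduces to: starting from $G$, how large can we make the intra-community common-neighbourhood of the pair $(v_i,v_j)$ using a budget of $t$ edge edits?

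Next I would set up the counting. Partition the other vertices $v_\ell\in C\setminus\{v_i,v_j\}$ into three classes according to the current adjacency pattern: those already adjacent to both $v_i$ and $v_j$ (there are $a_{ij}$ of these, using the notation of the statement); those adjacent to exactly one of $v_i,v_j$ (there are $b_{ij}$ of these, via the $\oplus$ definition); and those adjacent to neither. To turn a ``one-sided'' vertex into a common neighbour costs one edge addition; to turn a ``zero-sided'' vertex into a common neighbour costs two. The $a_{ij}$ vertices are already counted and cost nothing. With a budget of $t$, the greedy-optimal strategy is to first spend edits on one-sided vertices (one each) and then on zero-sided vertices (two each). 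If we convert $x$ one-sided and $y$ zero-sided vertices we gain $x+y$ common neighbours at cost $x+2y\le t$, subject to $x\le b_{ij}$; maximising $x+y$ gives $x=\min\{t,b_{ij}\}$ and then $y=\lfloor (t-x)/2\rfloor$, so the gain is $\min\{t,b_{ij}\}+\lfloor (t-\min\{t,b_{ij}\})/2\rfloor=\lfloor (t+\min\{t,b_{ij}\})/2\rfloor$. Adding the pre-existing $a_{ij}$ and imposing the structural cap $|C|-2$ (there are only $|C|-2$ candidate vertices in $C$) yields exactly the claimed expression $\min\{a_{ij}+\lfloor (t+\min\{t,b_{ij}\})/2\rfloor,\ |C|-2\}$; taking the maximum over all intra-community pairs $(i,j)$ finishes it.

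Two points need care, and the second is the main obstacle. First, one must check that edits spent elsewhere in the graph are never helpful: changing the edge $(v_i,v_j)$ itself does not affect whether a third vertex is a common neighbour, and edits not incident to $v_i$ or $v_j$ can only waste budget, so the $t$ edits can be assumed to all be additions of edges from $v_i$ or $v_j$ to vertices of $C$ — this is a short exchange argument. The harder part is verifying that the greedy strategy (saturate one-sided before zero-sided) is genuinely optimal under the combined constraint ``maximise $x+y$ subject to $x+2y\le t$, $x\le b_{ij}$, $y\le$ (number of zero-sided vertices)'', and that the interaction with the cap $|C|-2$ does not create a cheaper route — e.g. one must confirm that when $a_{ij}+b_{ij}$ together with zero-sided conversions would overshoot $|C|-2$, the $\min$ correctly records the truncation rather than some intermediate value. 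I would handle this by an explicit case split on whether $t\le b_{ij}$ and whether the unconstrained optimum exceeds $|C|-2$, showing in each case that the formula is both achievable (exhibit the edit set) and an upper bound (no edit set does better). The matching lower and upper bounds together give equality.
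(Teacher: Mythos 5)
Your proposal is correct and follows essentially the same route as the paper's proof: reduce via $\LS_{\nintratriangle}((G,\CP),t)=\max_{i<j}\LS^{\nintratriangle}_{ij}((G,\CP),t)$ to intra-community pairs, classify the remaining vertices of the community by adjacency to $v_i,v_j$ (the $a_{ij}$, $b_{ij}$ and ``zero-sided'' classes), spend the budget first on one-sided vertices (cost one each) and then on zero-sided vertices (cost two each), and pair this achievability construction with a matching upper bound capped at $|\psi_\CP(v_i)|-2$. The paper packages the greedy edit set as an explicit family $\ca{G}_t$ and handles your two ``points needing care'' by a case analysis on graphs outside that family, but the underlying argument is the same.
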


\begin{proof} 
Consider a graph $G$ with a community partition $\CP$, and a positive integer 
$t\ge 1$. As discussed in~\cite{NRS07,ZCPSX15}, 
$$\LS_{\nintratriangle}((G,\CP),t)=\max_{1\le i<j\le|\ca{V}|}
\LS^{\nintratriangle}_{ij}((G,\CP),t).$$ 
For every $i$ and $j$ such that $\psi_\CP(v_i)\ne\psi_\CP(v_j)$, 
we have that no intra-community triangle is created (resp. destroyed) 
by the addition (resp. removal) of $(v_i,v_j)$, 
so $\LS^{\nintratriangle}_{ij}((G,\CP),t)=0$. 
Thus, $$\LS_{\nintratriangle}((G,\CP),t)=
\max_{\{i,j\;|\;\psi_\CP(v_i)=\psi_\CP(v_j)\}}
\LS^{\nintratriangle}_{ij}((G,\CP),t).$$
 
We now focus on determining $\LS^{\nintratriangle}_{ij}((G,\CP),t)$ 
for every $i$ and $j$ such that $\psi_\CP(v_j)=\psi_\CP(v_j)=C$.
Consider a pair of such values $i$ and $j$, 
and let $S_1=\{v_\ell\in C\mid A_{i,\ell}=1 \land A_{j,\ell}=1\}$ 
and $S_2=\{v_\ell\in C\mid A_{i,\ell}\oplus A_{j,\ell} =1\}$\footnote{Note 
that $S_1$ and $S_2$ are the sets whose cardinalities define 
$a_{ij}$ and $b_{ij}$, respectively}. 

Let $\ca{G}_t$ be the class of all graphs that can be obtained 
by modifying $G$ as follows: 
\begin{enumerate}[1.]
\item Add $\min\{b_{ij},t\}$ arbitrary edges of the form $(x,y)$, 
where $x\in\{v_i,v_j\}$ and $y\in S_2$. 
\item If $t>b_{ij}$, take an arbitrary subset $S_3$ 
of vertices of $C\setminus(S_1\cup S_2\cup\{v_i,v_j\})$,  
with \mbox{cardinality} $\min\left\{\left\lfloor\frac{t-b_{ij}}{2}\right\rfloor, 
|C\setminus(S_1\cup S_2\cup\{v_i,v_j\})|\right\}$. For every $x\in S_3$, 
add the edges $(v_i,x)$ and $(v_j,x)$. 
\end{enumerate}

From the definition of $\ca{G}_t$, it follows that every \mbox{$G'\in\ca{G}_t$} 
satisfies $\phi(G,G')\le t$ and the graph $G''\sim_{ij}G'$ satisfies 
$$\vert\nintratriangle(G')-\nintratriangle(G'')\vert=
\min\left\{a_{ij}+\left\lfloor\frac{t+\min\{t,b_{ij}\}}{2}\right\rfloor,
|C|-2\right\}.$$

Now, consider an arbitrary graph $G'$, obtained by modifying $G$, 
such that $\phi(G,G')\le t$ and $G'\notin\ca{G}_t$. 
Also consider the graph $G''\sim_{ij}G'$. 
According to the definition of $\ca{G}_t$, the following situations are possible: 
\begin{enumerate}[{\rm (i)}]
\item $G'$ is the result of adding to $G$ a proper subset of the set of edges 
added by steps 1 and 2 of the procedure described above for obtaining 
an element of $\ca{G}_t$. 
In this case, only a proper subset of the triangles created (resp. destroyed) 
by the addition (resp. removal) of $(x,y)$ is added (resp. removed). 
Thus, $$\vert\nintratriangle(G')-\nintratriangle(G'')\vert<
\min\left\{a_{ij}+\left\lfloor\frac{t+\min\{t,b_{ij}\}}{2}\right\rfloor,
|C|-2\right\}.$$
\item $G'$ is the result of applying $t-t'$ additional modifications ($t'<t$) 
on an element $H$ of $\ca{G}_t$. Note that, by the definition 
of edge-edit distance, the additional modifications do not consist in reverting 
any edge addition made in steps 1 and 2 of the procedure described above. 
In this case, none of the additional modifications can result in the addition 
of a par of edges of the form $(v_i,x)$ and $(v_j,x)$, with $x\in C$, 
so \[\begin{aligned}
\vert\nintratriangle(G')-\nintratriangle(G'')\vert&=
\vert\nintratriangle(H)-\nintratriangle(H')\vert\\
&=\min\left\{a_{ij}+\left\lfloor\frac{t+\min\{t,b_{ij}\}}{2}\right\rfloor,
|C|-2\right\},
\end{aligned}\]
where $H'\sim_{ij}H$. 
\item In every other case, the transformation that allows 
to obtain $G'$ from $G$ can be divided into a set of edge additions 
as the ones described in (i) and a set of additional modifications 
as the ones described in (ii). Applying an analogous reasoning, 
we have that $$\vert\nintratriangle(G')-\nintratriangle(G'')\vert<
\min\left\{a_{ij}+\left\lfloor\frac{t+\min\{t,b_{ij}\}}{2}\right\rfloor,
|C|-2\right\}.$$
\end{enumerate}

Summing up the set of cases analysed above, we have that, for every 
$i$ and $j$ such that $\psi_\CP(v_i)=\psi_\CP(v_j)$, 
\[\begin{aligned}
\LS^{\nintratriangle}_{ij}((G,\CP),t)&=
\max_{\{G',G''\;|\;\phi(G,G')\le t, G'\sim_{ij}G''\}} 
\{\vert\nintratriangle(G')-\nintratriangle(G'')\vert\}\\
&=\min\left\{a_{ij}+\left\lfloor\frac{t+\min\{t,b_{ij}\}}{2}\right\rfloor,
|\psi_\CP(v_i)|-2\right\}
\end{aligned}\]
and, in consequence, 
\[\begin{aligned}
\LS_{\nintratriangle}((G,\CP),t)&=
\max_{\{i,j\;|\;\psi_\CP(v_i)=\psi_\CP(v_j)\}}
\LS^{\nintratriangle}_{ij}((G,\CP),t)\\
&=\max_{\{i,j\;|\;\psi_\CP(v_i)=\psi_\CP(v_j)\}}\left\{\min\left\{
a_{ij}+\left\lfloor\frac{t+\min\{t,b_{ij}\}}{2}\right\rfloor,|\psi_\CP(v_i)|-2
\right\}\right\}.
\end{aligned}\]
The proof is thus completed.
\end{proof}

In Proposition~\ref{prop-local-sens-intra-triangles}, the operator $\oplus$ 
denotes exclusive or. Notice that $\LS_{\nintratriangle}((G,\CP),t)$ 
can be efficiently computed for small values 
of $t$, and it converges to the efficiently computable global sensitivity 
$\Delta_\nintratriangle$ for $t\ge 2\max_{C\in\CP}|C|$, so it can be used 
for efficiently and privately computing the number of intra-community triangles. 

Finally, for computing the number of inter-community triangles, 
we use the method from~\cite{ZCPSX15} to compute the number of triangles 
of the entire graph, and subtract from it the number of intra-community 
triangles computed with the method described in this subsection. 

\subsection{Summary}
\label{ssec:sampling-dp-graphs}

The methods discussed in the previous subsections allow to compute
a differentially private instance of \cagm{}.
In what follows, we will use the notation \cagmdp{}
to clearly distinguish differentially private instances of \cagm.
The privacy budget $\varepsilon$
is split among the different computations as follows:
$\varepsilon_c=\frac{\varepsilon}{2}$ for the community partition method,
$\varepsilon_F=\frac{\varepsilon}{6}$ for the estimation
of~$\overline{\Theta}_F^c$, and $\varepsilon_d=\varepsilon_\triangle=
\varepsilon^\itin_\triangle=\varepsilon_X=\frac{\varepsilon}{12}$
for the estimation of degree distributions, triangle counts and $\olTheta_X^c$.

\begin{remark}
\label{prop-dp-full-method}
Parameter estimation for \cagmdp{} satisfies
$(\varepsilon_c+\varepsilon_X+\varepsilon_F+ \varepsilon_d+
\varepsilon_\triangle+\varepsilon_\triangle^\itin)$-differential privacy.
\end{remark}

\begin{proof}
The result follows straightforwardly from the fact that
$\varepsilon_c+\varepsilon_X+\varepsilon_F+ \varepsilon_d+
\varepsilon_\triangle+\varepsilon_\triangle^\itin=\frac{\varepsilon}{2}+
\frac{\varepsilon}{12}+\frac{\varepsilon}{6}+ \frac{\varepsilon}{12}+
\frac{\varepsilon}{12}+\frac{\varepsilon}{12}=\varepsilon$
\end{proof}

\section{Experiments}
\label{sec:experiment}

The purpose of our experiments is to empirically validate 
the following two claims: (i) our \cpgm{} model outperforms existing models
in generating graphs whose community structures are more similar to those
of the input graphs without sacrificing the ability to preserve
global structural properties, and (ii) differentially private instances 
of \cagm{} also outperform preceding models in terms of the preservation 
of community structure, while remaining comparable in terms of the preservation 
of global structural properties. 

\subsection{Datasets}

For our experiments, we use three real-world social networks with node attributes.
The first one has been collected from \mbox{Petster},
a website for pet owners to communicate~\cite{K13}.
It is an undirected graph whose nodes represent hamster owners.
Each node is labelled with attributes containing information
about the user's pet. We extracted 13 binary attributes
from 8 categorical attributes such as favourite food, gender, colour, species,
year of birth, etc. The second one is a subset of Facebook available
via SNAP \cite{snapnets}. In this dataset, node attributes are already binary
and are tagged with serial pseudonyms.
For our experiments, we selected the first 50 attributes
with the smallest serial numbers.
Finally, the third dataset, Epinions, is a directed graph extracted
from an online consumer reviews system, where every vertex
represents a reviewer~\cite{MA07}.
In the original dataset, a directed edge from node $A$ to node $B$ exists
if user $A$ trusts  the reviews of $B$.
For our experiments, we derived an undirected graph from the original dataset
by keeping the same vertex set and adding an undirected edge for every pair
of mutually trusting users. Additionally, we selected the 50 most
frequently rated products as node attributes. If the user rated the product,
the value is set to $1$, otherwise it is set to $0$.
Table~\ref{tab:datasets} summarises the main statistics of the three datasets.

\begin{table}[!ht]
\centering
\begin{tabular}{l|c|c|c|c|c}
\hline
{\bf Dataset}&{\bf \#nodes}&{\bf \#edges} & {\bf \#$\triangle$}\quad\quad
& {\bf cl. coeff.} & {\bf\#attr.}\\
\hline\hline
Petster & 1,898 &12,534 &16,750  &0.14 & 13\\
Facebook & 3,953 & 84,070 & 1,526,985 &  0.54&50 \\
Epinions &29,515 &106,147 & 235,790 & 0.13& 50\\ \hline
\end{tabular}
\caption{Datasets used for our experiments.}\label{tab:datasets}
\end{table}

\subsection{Evaluation Measures}
\label{ssec:eval-measures}

For every pair $(G,G')$, where $G$ is a real-life graph and $G'$ is a synthetic
graph sampled from a model learned from $G$, we evaluate the extent
to which $G'$ preserves the following properties of $G$.

\vspace{1mm}\noindent
{\bf Numbers of edges and triangles:}
Our evaluation measures in this case are the relative errors of the numbers
of edges and triangles in $G'$ with respect to those in $G$.
We define these measures as
$\rho_E=\frac{\bigl\lvert\lvert\ca{E}_{G'}\rvert-\lvert\ca{E}_{G}\rvert\bigr\rvert}
{\lvert\ca{E}_{G}\rvert}$
and $\rho_\triangle=\frac{\lvert
n_{\triangle}(G')-n_{\triangle}(G)\rvert}{n_{\triangle}(G)}$, respectively.

\vspace{1mm}\noindent
{\bf Global clustering coefficient:}
The \emph{global clustering coefficient} (GCC) of a graph measures the proportion
of wedges, that is, paths of length $2$, that are embedded in triangles.
It is defined as $\frac{3n_\triangle}{n_w}$, where $n_w$ is the number of wedges
and $n_\triangle$ is the number of triangles. We compare $G$ and $G'$
in terms of the relative error of the GCC of $G'$ with respect to that of $G$.
We denote this measure by $\rho_c$.

\vspace{1mm}\noindent
{\bf Degree distribution.}
We compare $G$ and $G'$ in terms of the \emph{Hellinger distance}
between their degree distributions.
The Hellinger distance has been deemed as the most appropriate distance
for comparing probability distributions in previous works
on graph synthesising~\cite{JYC16,MPS13}.
Given two probability distributions $p_1$ and $p_2$ on a discrete domain $W$,
the Hellinger distance between $p_1$ and $p_2$ is defined as
$$H(p_1, p_2) = \frac{1}{\sqrt{2}}\sqrt{\sum_{w\in W}
(\sqrt{p_1(w)}-\sqrt{p_2(w)})^2}.$$
The Hellinger distance yields values in the interval $[0,1]$. The more similar
two distributions are, the smaller the Hellinger distance between them.
For the particular case of degree distributions, we compute $p_d$ and $p'_d$,
which are defined on the domain $W=\{0, 1, \ldots, n-1\}$,
where $n$ is the number of vertices in $G$ and $G'$.
For every $i\in W$, $p_d(i)$ (resp. $p'_d(i)$) is the probability that a vertex
of $G$ (resp. $G'$) has degree $i$. The final score used for comparing $G$ and $G'$
is $H_d=H(p_d,p'_d)$.

\vspace{1mm}\noindent
{\bf Local clustering coefficients.}
In a graph $G$, the \emph{local clustering coefficient} (LCC)
of a node $v$ measures the proportion of pairs of mutual neighbours of $v$
that are connected by an edge. In the context of social graphs,
$LCC(v)$ is an indicator of the likelihood of $v$'s mutual friends
to also be friends. $LCC(v)$ is defined
as $\frac{2\sum_{v_i,v_j\in\ca{N}(v)}A_{i,j}}
{\lvert\ca{N}(v)\rvert\cdot(\lvert\ca{N}(v)\rvert-1)}$
where $\ca{N}(v)$ is the set of $v$'s neighbours.
For comparing $G$ and $G'$ in terms of local clustering coefficients,
we compute the distributions $p_{\ell c}$ and $p'_{\ell c}$, which are defined
in the domain $W=\{c\;|\;\exists_{v\in\ca{V}}(LCC_G(v)=c \lor LCC_{G'}(v)=c)\}$
in such a way that for every $i\in W$, $p_{\ell c}(i)$ (resp. $p'_{\ell c}(i)$)
is the probability that a vertex of $G$ (resp. $G'$) has LCC $i$.
We compare $G$ and $G'$ in terms of $H(p_{\ell c},p'_{\ell c})$,
and denote this measure as $H_{\ell c}$.

\vspace{1mm}\noindent
{\bf Distribution of attribute-edge correlations.}
Recall that $k$ represents the number of components of every attribute vector
labelling the vertices of both $G$ and $G'$. Given a community partition
$\CP$ of $G$, we define for every $C\in\CP$ the distributions $p_F^C$
and $\widetilde{p}_F^C$ in the domain $W=\{0,1\}^k$ in such a way that,
for every $i\in W$, $p_F^C(i)$ (resp. $\tilde{p}_F^C(i)$) is the probability
that a vertex belonging to $C$ in $G$ (ergo, in the context of this paper,
also in $G'$) is labelled with the attribute vector $i$ in $G$ (resp. in~$G'$).
We compare $G$ and $G'$ in terms of the parameter $\rho_a$, which is defined
as $\rho_a = \max_{C\in\CP}\left\{H(p_F^C, \widetilde{p}_F^C)\right\}.$

\vspace{1mm}\noindent
{\bf Detectability of community partition.}
We evaluate to what extent state-of-the-art community detection algorithms
find similar communities in $G$ and $G'$.
To that end, we use the averaged $F_1$ score,
denoted Avg-$F_1$, of the community structures $\CP$ and $\CP'$ determined
by the algorithm in $G$ and $G'$, respectively.
The averaged $F_1$ score has been widely used for evaluating community detection
algorithms~\cite{YML13,YL13,NIR16}.
Given two communities $C_1$ and $C_2$, the $F_{1}$ score between
these two communities, denoted $F_1(C_1,C_2)$ combines two auxiliary measures:
\emph{precision} and \emph{recall}.
Precision is defined as
${\it prec}(C_1, C_2)=\frac{\lvert C_1\cap C_2\rvert}{\lvert C_1\rvert}$,
whereas recall is defined
as ${\it recall}(C_1, C_2)=\frac{\lvert C_1\cap C_2\rvert}{\lvert C_2\rvert}$.
Precision and recall are combined
as $F_1(C_1,C_2) = \frac{2\cdot{\it prec}(C_1,C_2)
\cdot{\it recall}(C_1,C_2)} {{\it prec}(C_1,C_2)+{\it recall}(C_1,C_2)}$.
If both precision and recall are zero, $F_1$ is made zero by convention.
Following the evaluation strategy introduced in~\cite{YML13,YL13,NIR16},
given two sets of communities $\ca{C}_1$ and $\ca{C}_2$, we first determine
the average of the $F_1$ values between every community of $\CP_1$
and its best match in $\CP_2$ (in terms of $F_1$), then the average
of the $F_1$ values between every community of $\CP_2$ and its best match
in $\CP_1$, and finally these two values are also averaged.
The average $F_{1}$-score is defined as
$$\frac{1}{2\lvert\ca{C}_1\rvert}\sum_{C_i^1\in\ca{C}_1}\max_{C_j^2\in\ca{C}_2}
F_1(C_i^1, C_j^2) + \frac{1}{2\lvert\ca{C}_2\rvert}\sum_{C_i^2\in\ca{C}_2}
\max_{C_j^1\in\ca{C}_1} F_1(C_i^2, C_j^1).$$
Avg-$F_1$ values are in the interval $[0,1]$.
The larger the value of Avg-$F_1$, the more similar
the community structures of $\mathcal{C}_1$ and $\mathcal{C}_2$
are considered to be.

\subsection{Results and Discussion}
\label{ssec:results}

We first evaluate the ability of our new edge generation model, \cpgm,
to synthesise graphs that preserve the community structures of the original
graphs along with global structural properties. Then, we assess
the overall quality of the differentially private \cagmdp{} model.

\subsubsection{Evaluation of {\sf CPGM}}
\label{sssec:results-cpgm}

We compare \cpgm{} with the two most similar counterparts reported
in the literature: \tricycle~\cite{JYC16} and \dcsbm~\cite{KE11}.
\mbox{\tricycle{}} has been shown to preserve a number of global structural
properties, but it does not aim to preserve the community structure;
whereas \dcsbm{} belongs to a family of models that has been shown
to preserve the community structure, disregarding global structural properties.
Figure~\ref{fig:f1-CPGM} shows the extent to which the community structures
found by the state-of-the-art community detection algorithm Louvain~\cite{BGLL08}
in the synthetic graphs generated by each model are similar to those detected
in the corresponding original graphs. Table~\ref{tab:commfeat}
compares the behaviour of all three edge generation models in terms of global
structural properties. In all cases, the values shown are averaged
over 10 executions.

\begin{figure}[h]
\centering
\includegraphics[scale=0.5]{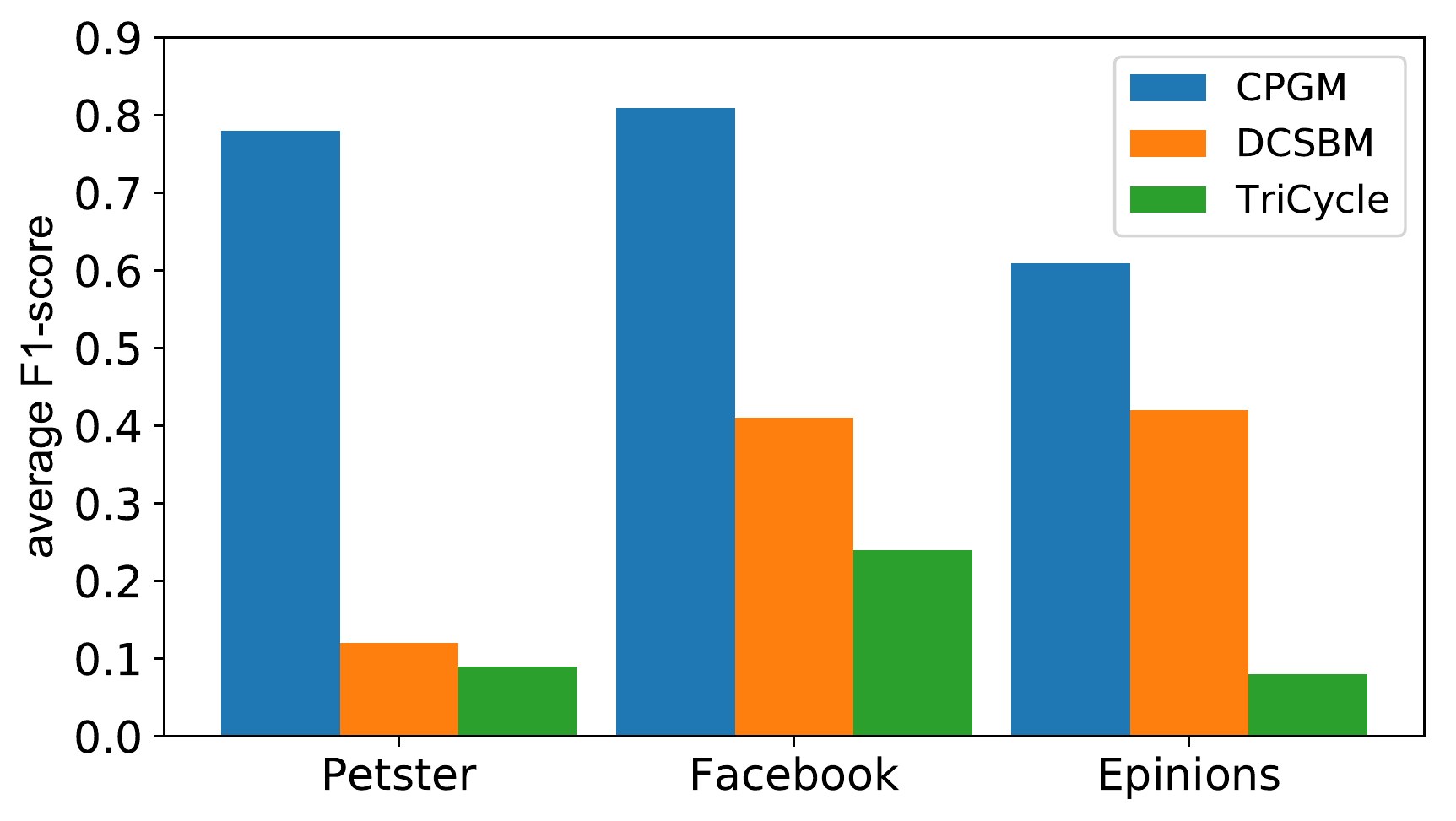}
\caption{Similarities of community structures found by Louvain
in synthetic graphs to those found in the original graphs.}
\label{fig:f1-CPGM}
\end{figure}

From the analysis of these results, we extract three major observations.
Firstly, as Figure~\ref{fig:f1-CPGM} shows,
the community structures of the graphs synthesised using our \cpgm{} model
are consistently more similar to those of the original graphs, in comparison
to those induced by \dcsbm{} and \tricycle. This supports our claim
that \cpgm{} is able to preserve community structure to a larger extent.
Note that, in several cases,
our model performs almost twice as good as the second best, \dcsbm.
As~expected, \tricycle{} shows the poorest results, corroborating
the intuition that community structure needs to be explicitly included
in the generative model if we want synthetic graphs to preserve it.
The previous observations support our design choices of preserving
(i) the community structure, and (ii) differentiated intra- and inter-community
structural properties.

Secondly, the graphs generated by our \cpgm{} model are consistently
the most accurate in terms of the distributions of local clustering coefficients
(see right-most column of Table~\ref{tab:commfeat}),
and are the most accurate in terms of global clustering coefficient
in all but one dataset (see column labelled $\rho_c$ in Table~\ref{tab:commfeat}).
An analogous observation can be made
for the number of triangles (column labelled $\rho_\triangle$).
We consider that these observations support 
our design choice of preserving separate intra- and inter-community
edge densities and triangle counts. The comparably poorer performance
of \dcsbm{} in terms of global and local clustering coefficients
also corroborates the need to explicitly model them, as do \cpgm{}
and \mbox{\tricycle}. A more detailed graphical description of the behaviour
of the three models in terms of the distributions of local clustering coefficients
is shown in Figure~\ref{fig:degreeAndLcc} (d)--(e),
in Appendix~\ref{app:distribution}. The figure shows the complementary
cumulative distribution functions of local clustering coefficients
for the three models, and highlights the special ability of our \cpgm{} model
to capture the behaviour of the distribution for denser-than-normal
graphs (the Facebook dataset in this case).

\begin{table}[!ht]
\centering
\begin{tabular}{l|c|c|c|c|c|c}
\hline
\textbf{Dataset}& \textbf{Model} & \textbf{$\rho_E$} &
\textbf{$\rho_\triangle$} & \textbf{$\rho_c$} & \textbf{$H_d$} &
\textbf{$H_{\ell c}$} 
\\
\hline\hline
\multirow{3}{*}{\textbf{Petster}}  &
\cpgm& 0.00& 0.18& 0.05 & 0.16& 0.19
\\
& \dcsbm & 0.00& 0.12& 0.49& 0.17& 0.27
\\
& \tricycle & 0.00& 0.00& 0.19& 0.18& 0.21 
\\ \hline
\multirow{3}{*}{\textbf{Facebook}} & \cpgm & 0.00 & 0.03& 0.32& 0.15& 0.32
\\
& \dcsbm & 0.00& 0.25& 0.71& 0.25& 0.64
\\
& \tricycle & 0.00& 0.04& 0.56& 0.37& 0.60 
\\\hline
\multirow{3}{*}{\textbf{Epinions}} & \cpgm &
0.001& 0.04& 0.27& 0.13&0.31
\\
& \dcsbm & 0.002& 0.60&0.83&0.14&0.26
\\
& \tricycle &0.001& 0.04& 0.22& 0.10& 0.31
\\ \hline
\end{tabular}
\caption{Comparison of edge generative models 
in terms of global structural properties.}
\label{tab:commfeat}
\end{table}

Finally, we point out that all three models successfully preserve
the properties of the degree distribution.
Our \cpgm{} model produces the most consistent results in all the datasets
in terms of both Hellinger distances and the shape of the complementary
cumulative distribution functions (see Figure~\ref{fig:degreeAndLcc} (a)--(c)
in Appendix~\ref{app:distribution}). Again, \cpgm{} is the one that better
captures the cumulative distribution for the denser Facebook dataset.

Summing up, the results shown in this subsection support our claim
that synthetic graphs sampled from our \cpgm{} model preserve
the community structure of the original graph
to a considerably larger extent than its closest
counterparts, without sacrificing the ability to preserve global structural
properties. Additionally, these results show that the manner in which \cpgm{}
computes intra- and inter-community parameters also helps it outperform
competing models in preserving local and global clustering coefficients.

\subsubsection{Evaluation of differentially private {\sf C-AGM}}
\label{sssec:eval-g-acmdp}

We compare \cagmdp{} with two other models. The first one
is the differentially private \agm{} model using \tricycle{}
as edge set generator~\cite{JYC16}. We refer to this model
as \agmdptri. It was shown in~\cite{JYC16} that, despite the noise added
to guarantee privacy, \agmdptri{} still preserves to some extent
\tricycle's ability to capture global structural properties.
As we saw in the previous section, \tricycle{} performs poorly
in preserving the community structure of the original graph,
so we additionally consider for our evaluation an additional model,
which is a modification of \cagmdp{} where \cpgm{} is replaced
by \dcsbm{} as the edge generation model. We refer
to this model as \mbox{\cagmdpd}.

\begin{figure*}[t]
\centering
\includegraphics[scale=0.27]{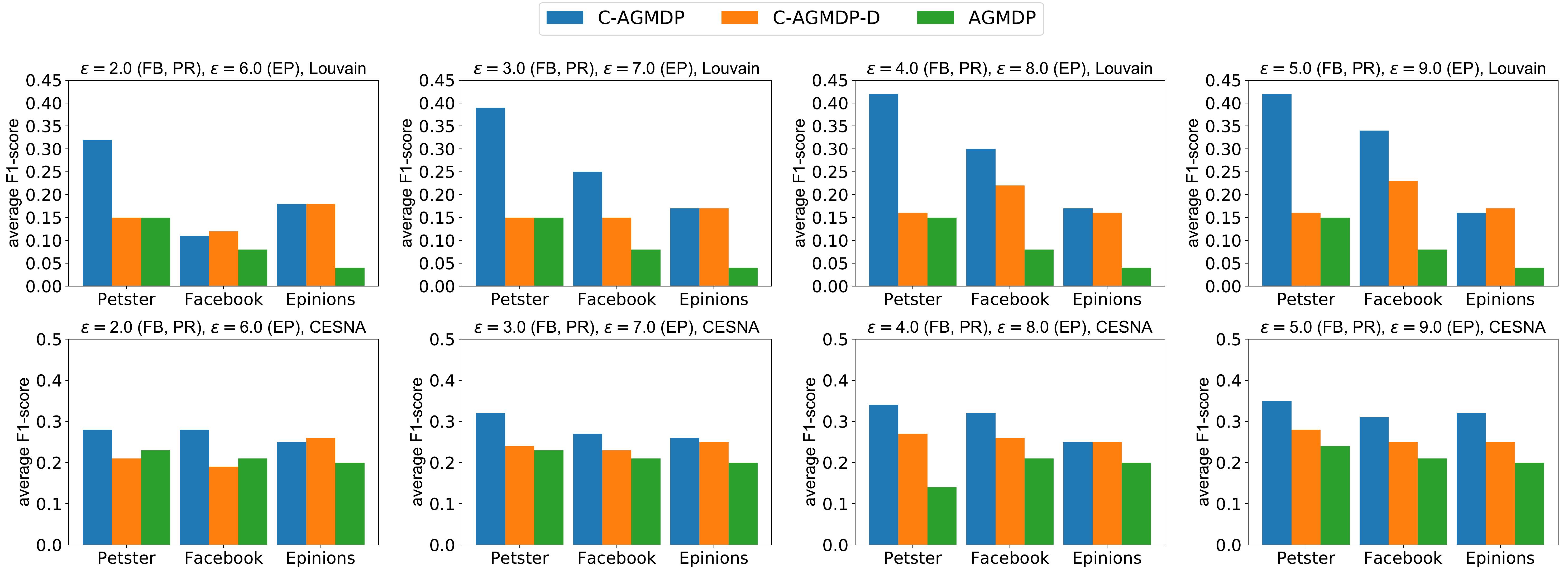}
\caption{Comparison of differentially private models
in terms of community structure preservation.}\label{fig:f1DP}
\end{figure*}

In our experiments, we compare the behaviour of the three models under four
privacy budgets. For the smaller Petster and Facebook datasets,
we use the values $2.0$, $3.0$, $4.0$ and $5.0$.
For the considerably larger Epinions dataset, using these values results
in considerably noisy outputs, where pairwise differences between the models
have been blurred away by the noise. Thus, in order to enable comparisons,
we show for this dataset the results obtained using the values
$6.0$, $7.0$, $8.0$ and $9.0$.
Every particular comparison of the three models uses a common privacy budget.
Since \cagmdpd{} and \agmdptri{} each have less parameters
than \cagmdp, we re-allocate in each case the remaining privacy budget
to other computations. In estimating \cagmdpd, we allocate to
community partition the same budget as for \cagmdp,
that is $\frac{\varepsilon}{2}$. \cagmdpd{} requires to compute the numbers
of edges between every pair of communities. We assign to this computation
the budget used in \cagmdp{} for counting the number of intra-community
triangles, i.e. $\frac{\varepsilon}{12}$. Finally, \cagmdpd{} is given
for degree sequence computation the budget
$\varepsilon'_d=\varepsilon_d+\varepsilon_\triangle=\frac{\varepsilon}{6}$,
as it does not require to count the global number of triangles.
Since \mbox{\agmdptri{}} computes every parameter computed by \cagmdp,
except for the community partition (which takes
half of the budget of \cagmdp) we double the budget assigned to every
other computation of \agmdptri.
Notice that these re-allocations give some advantages to \cagmdpd{}
and \agmdptri{} in their comparison with our \cagmdp{} model,
as they will be able to more accurately compute some of the parameters
they have in common. We chose to allow this advantage considering
that requiring a smaller number of computations is in fact a positive
feature of a differentially private method, which should not be punished
in the comparison. In \cagmdp{} and \mbox{\cagmdpd},
ModDivisive is run with $w_s=0.98$. Finally, in estimating the distribution 
of attribute-edge correlations (as discussed in Section~\ref{ssec:dp-att-edge-corr}), 
we set the maximum degree parameter $p$ to $100$. 

Figure~\ref{fig:f1DP} displays the behaviours of the three models in terms
of community structure preservation, whereas Tables~\ref{tab:dppetster},
\ref{tab:dpfacebook} and~\ref{tab:dpepinions} summarise their behaviours
in terms of global structural properties and attribute-edge correlations
on Petster, Facebook and Epinions,
respectively. In what follows, we analyse these results from three
different perspectives.

\begin{table}[!th]
\centering
\begin{tabular}{l|l|l|l|l|l|l|l}
\hline
\multicolumn{1}{c|}{$\varepsilon$} & \textbf{Model} & \multicolumn{1}{c|}{$\rho_E$}
&\multicolumn{1}{c|}{$\rho_\triangle$} & \multicolumn{1}{c|}{$\rho_c$}
&\multicolumn{1}{c|}{$H_d$} & \multicolumn{1}{c|}{$H_{\ell c}$} & $\rho_a$ \\
\hline
\hline
\multirow{3}{*}{2.0} & \cagmdp & 0.56 & 0.09 & 0.43 & 0.42 & 0.39 & 0.14 \\
&\cagmdpd & 0.22 & 0.55 & 0.69 & 0.30 & 0.44 & 0.23 \\
&\agmdptri & 0.25 & 0.09 & 0.25 & 0.23 & 0.30 & 0.17 \\
\hline
\multirow{3}{*}{3.0} & \cagmdp & 0.31 & 0.09 & 0.23 & 0.33 & 0.29 & 0.16 \\
&\cagmdpd & 0.11 & 0.30 & 0.55 & 0.24 & 0.34 & 0.16 \\
&\agmdptri & 0.13 & 0.09 & 0.21 & 0.19 & 0.26 & 0.17 \\
\hline
\multirow{3}{*}{4.0} & \cagmdp & 0.19 & 0.08 & 0.20 & 0.29 & 0.26 & 0.13 \\
&\cagmdpd & 0.06 & 0.29 & 0.54 & 0.22 & 0.32 & 0.14 \\
&\agmdptri & 0.09 & 0.08 & 0.19 & 0.19 & 0.25 & 0.17 \\
\hline
\multirow{3}{*}{5.0} & \cagmdp & 0.13 & 0.08 & 0.08 & 0.24 & 0.23 & 0.09 \\
&\cagmdpd & 0.04 & 0.29 & 0.54 & 0.20 & 0.31 & 0.12 \\
&\agmdptri & 0.06 & 0.10 & 0.17 & 0.18 & 0.24 & 0.16\\
\hline
\end{tabular}
\caption{Comparison of differentially private models on Petster.}
\label{tab:dppetster}
\end{table}

\vspace{2mm}
{\bf Community structure preservation.}
In Figure~\ref{fig:f1DP}, the four uppermost charts display the extent
to which the community structures found by Louvain in the synthetic graphs
generated by each differentially private model are similar to those detected
in the corresponding original graphs. Two important features
of the Louvain algorithm are shared by ModDivisive,
the method used for obtaining community partitions in \cagmdp{}
and \cagmdpd. Both generate a community partition,
and both operate by maximising modularity. In order to assess whether
the community structures induced by our models in the synthetic graphs
are also detectable by algorithms based on different criteria,
we additionally obtained analogous results using
the algorithm CESNA~\cite{YML13}. These results are shown
in the lowermost four charts of Figure~\ref{fig:f1DP}. Unlike Louvain,
CESNA takes node attributes into consideration
for computing communities. However, CESNA
tends to obtain substantially overlapping communities, whereas both
\cagmdp{} and \cagmdpd{} assume a partition.
CESNA requires as a parameter the number of communities,
which we set to~$10$.

\begin{table}[!th]
\centering
\begin{tabular}{l|l|l|l|l|l|l|l}
\hline
\multicolumn{1}{c|}{$\varepsilon$} & \textbf{model} & \multicolumn{1}{c|}{$\rho_E$}
&\multicolumn{1}{c|}{$\rho_\triangle$} & \multicolumn{1}{c|}{$\rho_c$}
&\multicolumn{1}{c|}{$H_d$} & \multicolumn{1}{c|}{$H_{\ell c}$} & $\rho_a$ \\
\hline
\hline
\multirow{3}{*}{2.0} & \cagmdp & 0.10 & 0.01 & 0.59 & 0.25 & 0.54 & 0.13 \\
&\cagmdpd & 0.02 & 0.90 & 0.90 & 0.18 & 0.86 & 0.08 \\
&\agmdptri & 0.06 & 0.13 & 0.58 & 0.31 & 0.59 & 0.19 \\
\hline
\multirow{3}{*}{3.0} & \cagmdp & 0.05 & 0.01 & 0.51 & 0.22 & 0.47 & 0.09 \\
&\cagmdpd & 0.01 & 0.89 & 0.89 & 0.16 & 0.90 & 0.06 \\
&\agmdptri & 0.03 & 0.15 & 0.59 & 0.32 & 0.59 & 0.19 \\
\hline
\multirow{3}{*}{4.0} & \cagmdp & 0.03 & 0.01 & 0.50 & 0.21 & 0.46 & 0.07 \\
&\cagmdpd & 0.01 & 0.87 & 0.88 & 0.15 & 0.88 & 0.05 \\
&\agmdptri & 0.02 & 0.15 & 0.60 & 0.32 & 0.19 & 0.08 \\
\hline
\multirow{3}{*}{5.0} & \cagmdp & 0.02 & 0.01 & 0.48 & 0.21 & 0.43 & 0.06 \\
 & \cagmdpd & 0.01 & 0.84 & 0.88 & 0.16 & 0.86 & 0.04 \\
 & \agmdptri & 0.01 & 0.15 & 0.60 & 0.33 & 0.59 & 0.19\\\hline
\end{tabular}
\caption{Comparison of differentially private models on
Facebook.}
\label{tab:dpfacebook}
\end{table}

From the analysis of these results,
the most relevant observation is that, considering the communities detected
by both Louvain and CESNA, the graphs sampled
from our \mbox{\cagmdp}
model consistently rank as the ones whose community structure is most similar
to that of the corresponding original graphs. In the particular cases of Petster
and Facebook with the Louvain algorithm, the similarity values displayed
by our \cagmdp{} model are in some cases close to twice better
than their counterparts for \cagmdpd, and considerably more
than those for \agmdptri. An additional important observation
is that our model shows less variation for different amounts of noise,
when compared to \cagmdpd{} and \agmdptri.

\vspace{2mm}\noindent
{\bf Distributions of attribute-edge correlations.}
On each original graph, we compute the distributions of attribute-edge
correlations in each community detected by CESNA.
Then, we compute the equivalent distributions on each synthetic graph
and compare it to that of the original graph in terms of $\rho_a$
(see right-most columns of Tables~\ref{tab:dppetster}, \ref{tab:dpfacebook}
and~\ref{tab:dpepinions}).

From the analysis of these results,
we can see that the synthetic graphs sampled from \cagmdp{}
and \cagmdpd, the two models that consider community structures,
consistently outperform those sampled from \agmdptri{} in terms of $\rho_a$.
Another important observation is that the qualities, in terms of $\rho_a$,
of synthetic graphs sampled from our \cagmdp{} model and those sampled
from its variant \cagmdpd{} are quite similar. This observation suggests
that \cagmdp{} can in some cases be seen as a \emph{meta-model},
where several edge generation models can be used,
e.g. \cpgm{} and \dcsbm{} in this experiment.

\vspace{2mm}\noindent
{\bf Global structural properties.}
From the analysis of Tables~\ref{tab:dppetster}, \ref{tab:dpfacebook}
and~\ref{tab:dpepinions}, we can see that, as expected, our \cagmdp{} model
suffers a larger degradation than \cagmdpd{} and \agmdptri{}
in terms of the measures that depend on parameters
for which the latter models were allocated
larger privacy budgets, most notably the numbers of edges. This problem
was particularly serious on Petster, and became less important
as the number of edges of the real graph increased. It is worth noting, however,
that in the cases where a differentially private parameter underwent
a post-processing, most notably regarding the number of triangles,
our model obtained considerably better results. For example,
the error rate dropped to $0.01$ for the Facebook dataset.
Also in the Facebook graph,
despite the larger error rate in the number of edges,
in some cases our model showed roughly the same or even better performance
in preserving the degree sequence and clustering coefficients
than the \mbox{\agmdptri{}} model. Also note that, although \cagmdpd{}
performs best in preserving the degree sequences, in general
it failed to preserve the clustering coefficients. 

\begin{table}[!th]
\centering
\begin{tabular}{l|l|l|l|l|l|l|l}
\hline
\multicolumn{1}{c|}{$\varepsilon$} & \textbf{model} & \multicolumn{1}{c|}{$\rho_E$}
&\multicolumn{1}{c|}{$\rho_\triangle$} & \multicolumn{1}{c|}{$\rho_c$}
&\multicolumn{1}{c|}{$H_d$} & \multicolumn{1}{c|}{$H_{\ell c}$} & $\rho_a$ \\
\hline
\hline
\multirow{3}{*}{6.0} & \cagmdp & 0.08 & 0.20 & 0.53 & 0.23 & 0.21 & 0.09 \\
&\cagmdpd & 0.08 & 0.62 & 0.85 & 0.19 & 0.23 & 0.09 \\
&\agmdptri & 0.13 & 0.31 & 0.16 & 0.13 & 0.27 & 0.13 \\
\hline
\multirow{3}{*}{7.0} & \cagmdp & 0.07 & 0.18 & 0.53 & 0.23 & 0.21 & 0.09 \\
&\cagmdpd & 0.07 & 0.69 & 0.91 & 0.20 & 0.24 & 0.09 \\
&\agmdptri & 0.13 & 0.31 & 0.16 & 0.12 & 0.27 & 0.13 \\
\hline
\multirow{3}{*}{8.0} & \cagmdp & 0.05 & 0.16 & 0.54 & 0.20 & 0.20 & 0.09 \\
&\cagmdpd & 0.05 & 0.70 & 0.91 & 0.18 & 0.24 & 0.08 \\
&\agmdptri & 0.14 & 0.32 & 0.18 & 0.11 & 0.26 & 0.13 \\
\hline
\multirow{3}{*}{9.0} & \cagmdp & 0.04 & 0.14 & 0.54 & 0.17 & 0.20 & 0.09 \\
&\cagmdpd & 0.04 & 0.75 & 0.93 & 0.18 & 0.25 & 0.08 \\
&\agmdptri & 0.15 & 0.31 & 0.18 & 0.11 & 0.26 & 0.13\\
\hline
\end{tabular}
\caption{Comparison of differentially private models on Epinions.}
\label{tab:dpepinions}
\end{table}

\section{Conclusions}
\label{sec:conclusions}

We have presented, to the best of our knowledge, the first community-preserving 
differentially private method for publishing synthetic attributed graphs. 
To devise this method, we developed \cagm, a new community-preserving 
generative attributed graph model. 
We have equipped \cagm{} with efficient parameter estimation 
and sampling methods, and have devised differentially private 
variants of the former. A comprehensive set of experiments 
on real-world datasets support the claim that our method 
is able to generate useful synthetic graphs satisfying 
a strong formal privacy guarantee. 
Our main direction for future work is to improve \cagm{} 
by increasing the repertoire of community-related statistics  
captured by the model, and by equipping it with a new differentially private 
community partition method that integrates node attributes 
via a low-sensitivity objective function and/or 
differentially private maximum-likelihood estimation methods. 

\vspace*{.7cm}
\noindent \textbf{Acknowledgements:} The work reported in this paper 
received funding from Luxembourg's Fonds National de la Recherche (FNR), 
via grant C17/IS/11685812 (PrivDA).

\appendix

\section{Distribution of degree and local clustering coefficient}
\label{app:distribution}

Figure~\ref{fig:degreeAndLcc} shows the comparison of degree distributions (a--c) 
and distributions of local clustering coefficients (d--f) in terms of 
the \emph{complementary cumulative distribution functions} (CCDF). 
Every degree or LCC value (x-axis) is mapped to the percentage of vertices 
having a larger value (y-axis).

\begin{figure*}[!ht]
\subfigure[Petster]{\includegraphics[scale=0.35]{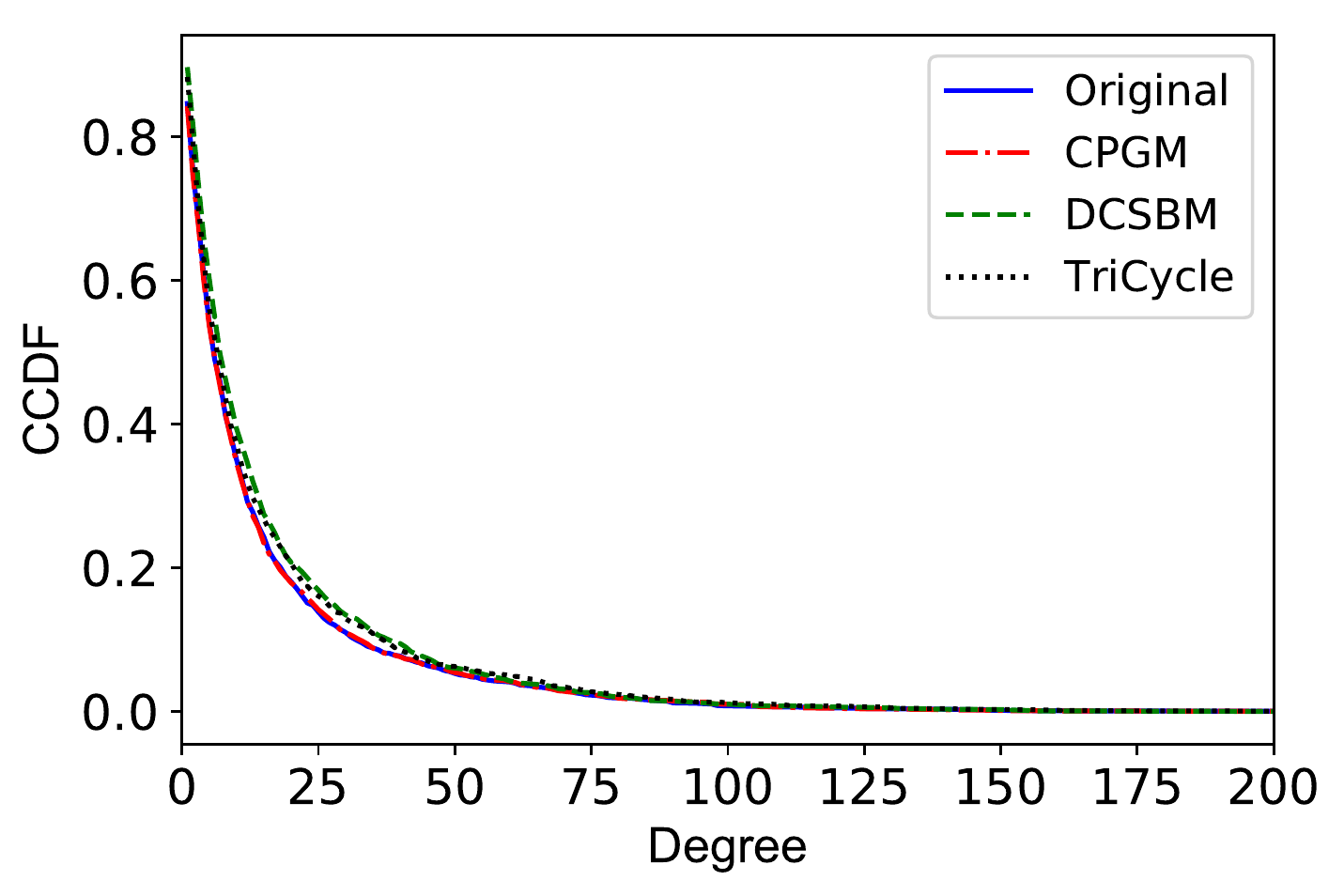}}
\subfigure[Facebook]{\includegraphics[scale=0.35]{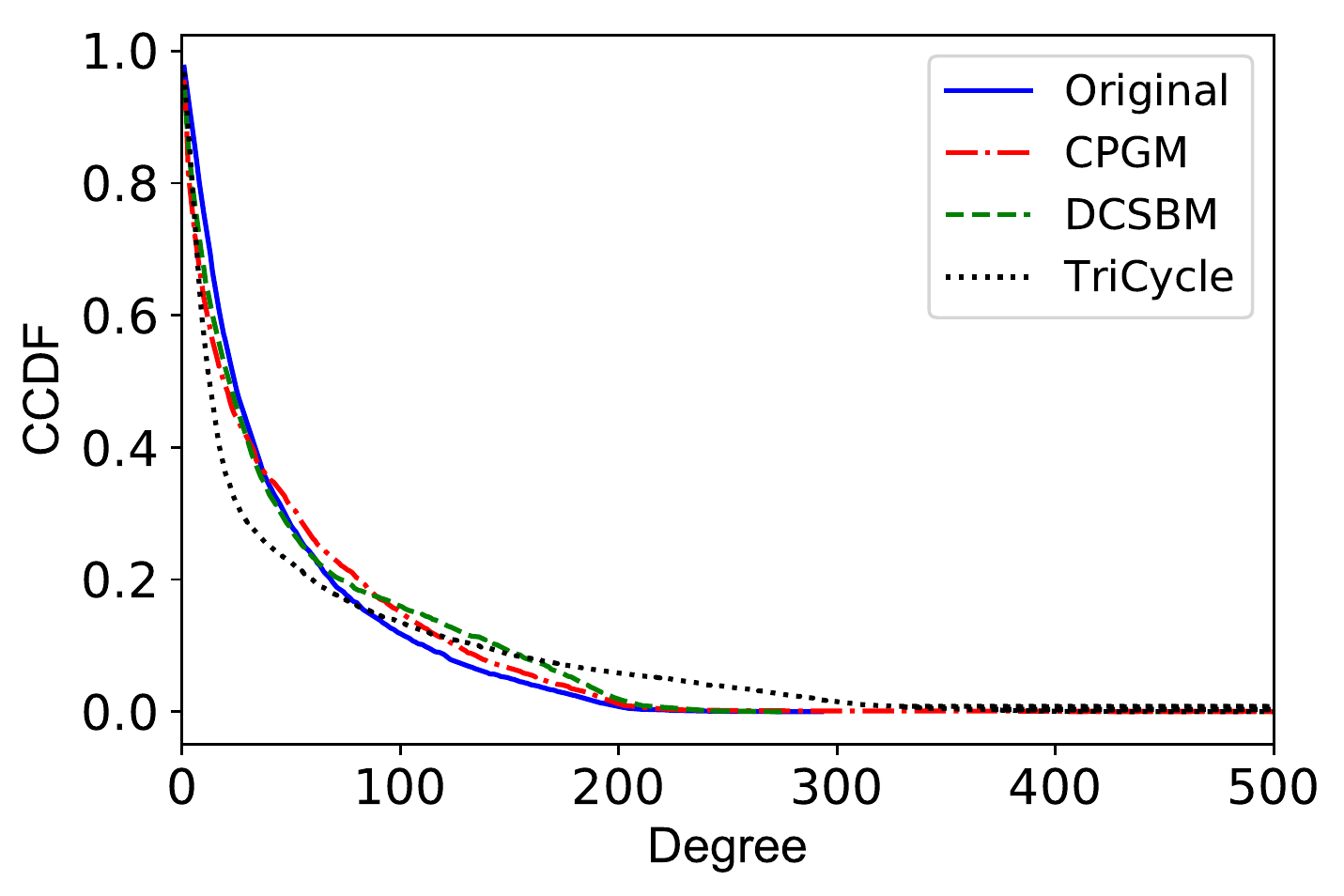}}
\subfigure[Epinions]{\includegraphics[scale=0.35]{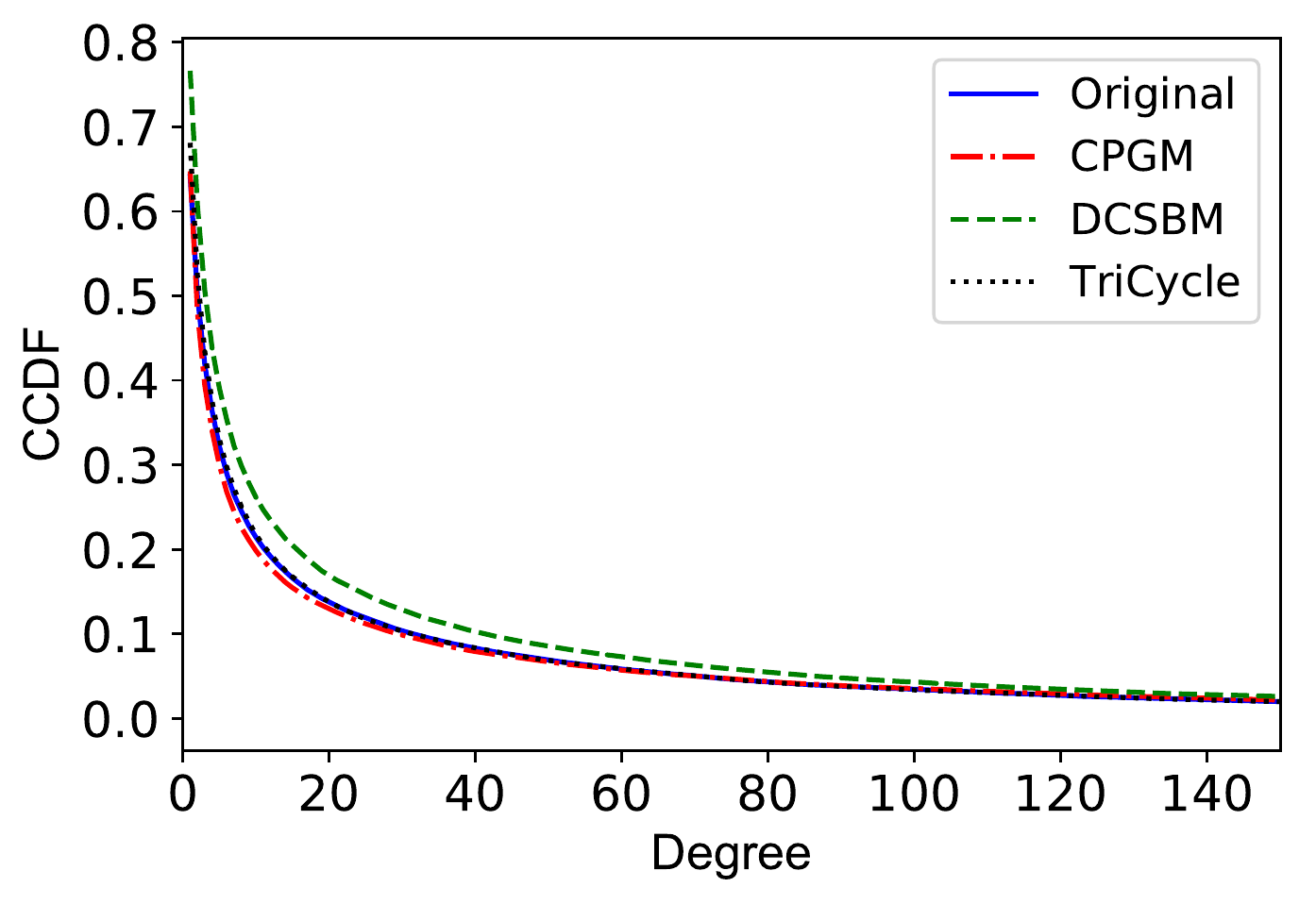}}
\\
\subfigure[Petster]{\includegraphics[scale=0.35]{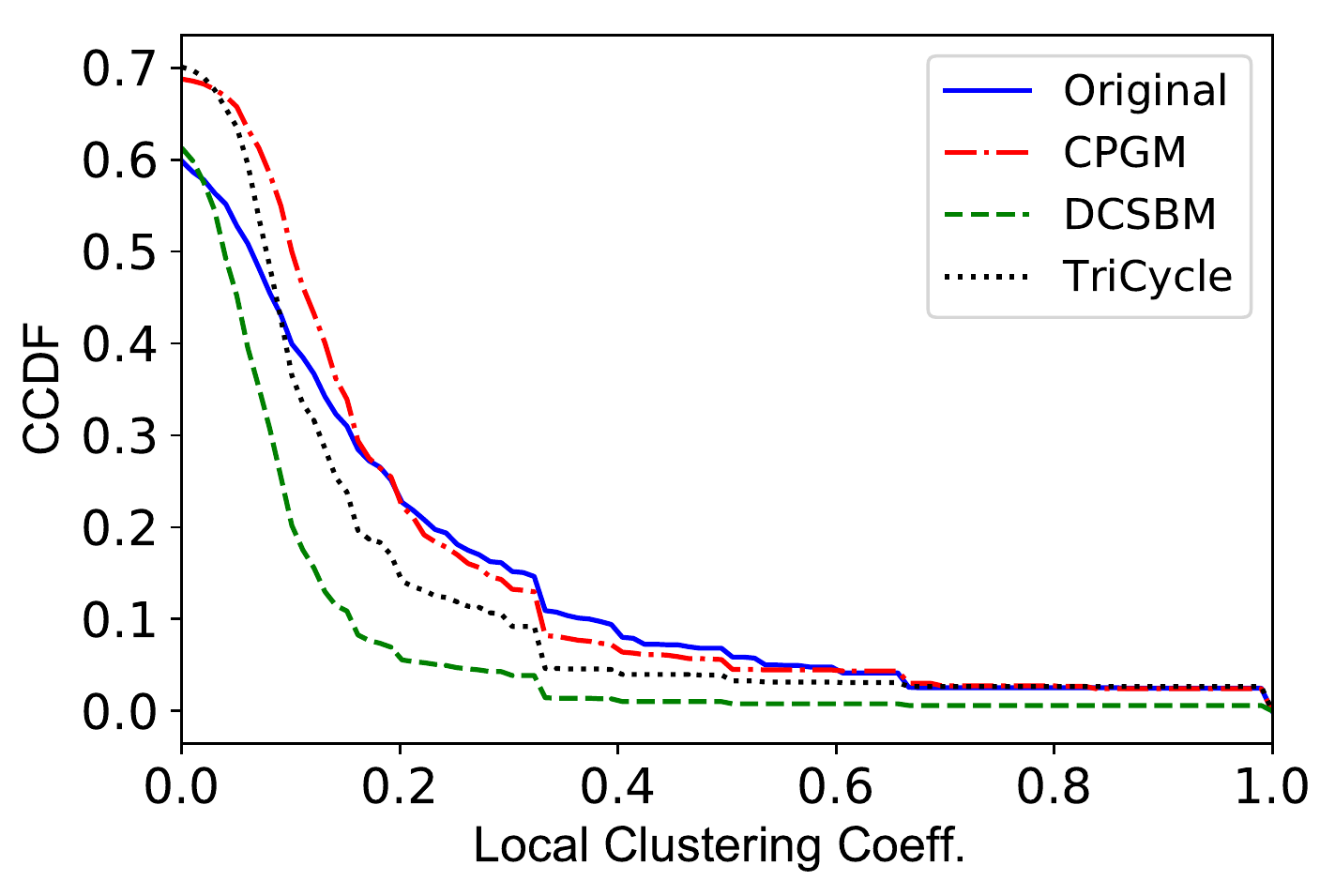}}
\subfigure[Facebook]{\includegraphics[scale=0.35]{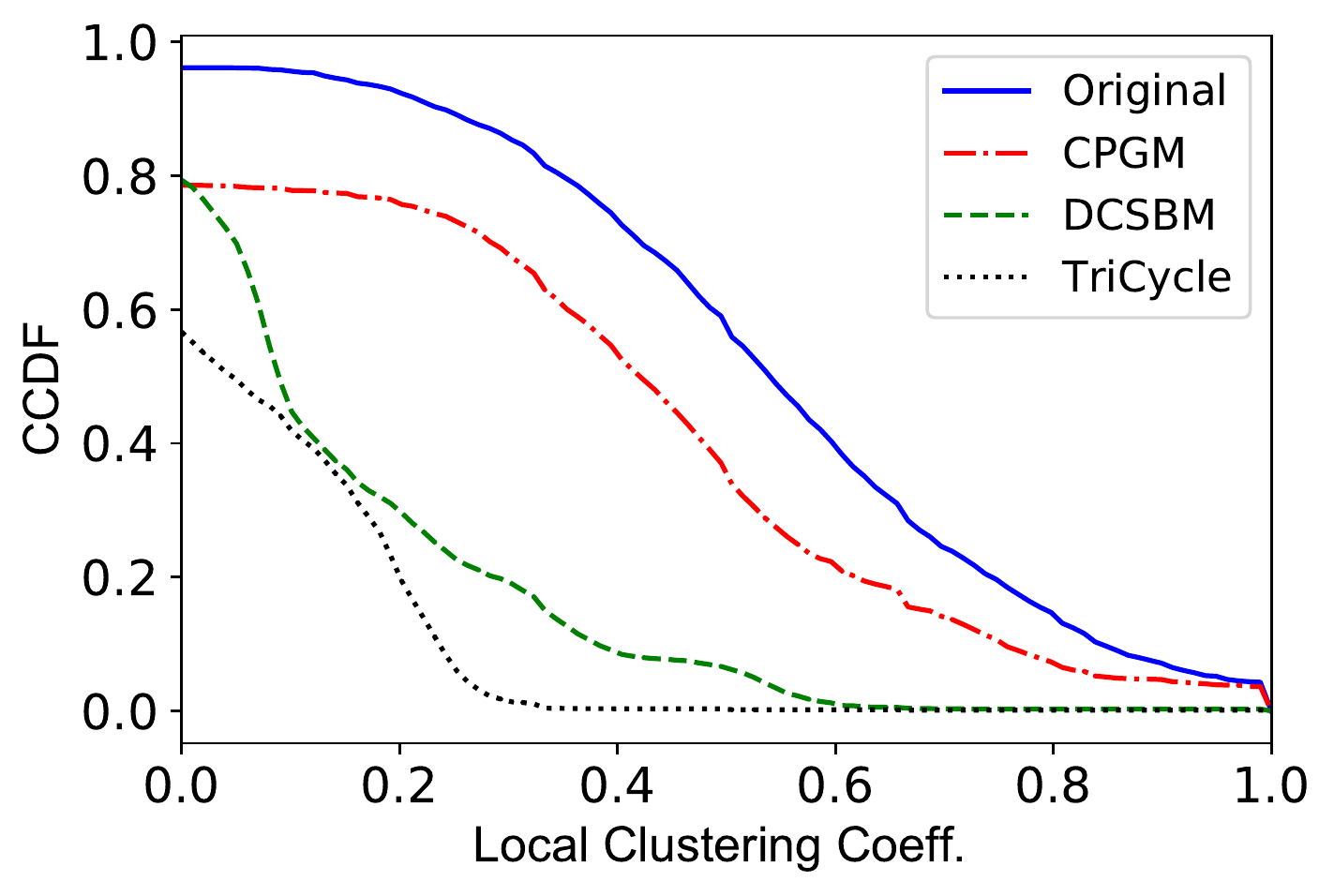}}
\subfigure[Epinions]{\includegraphics[scale=0.35]{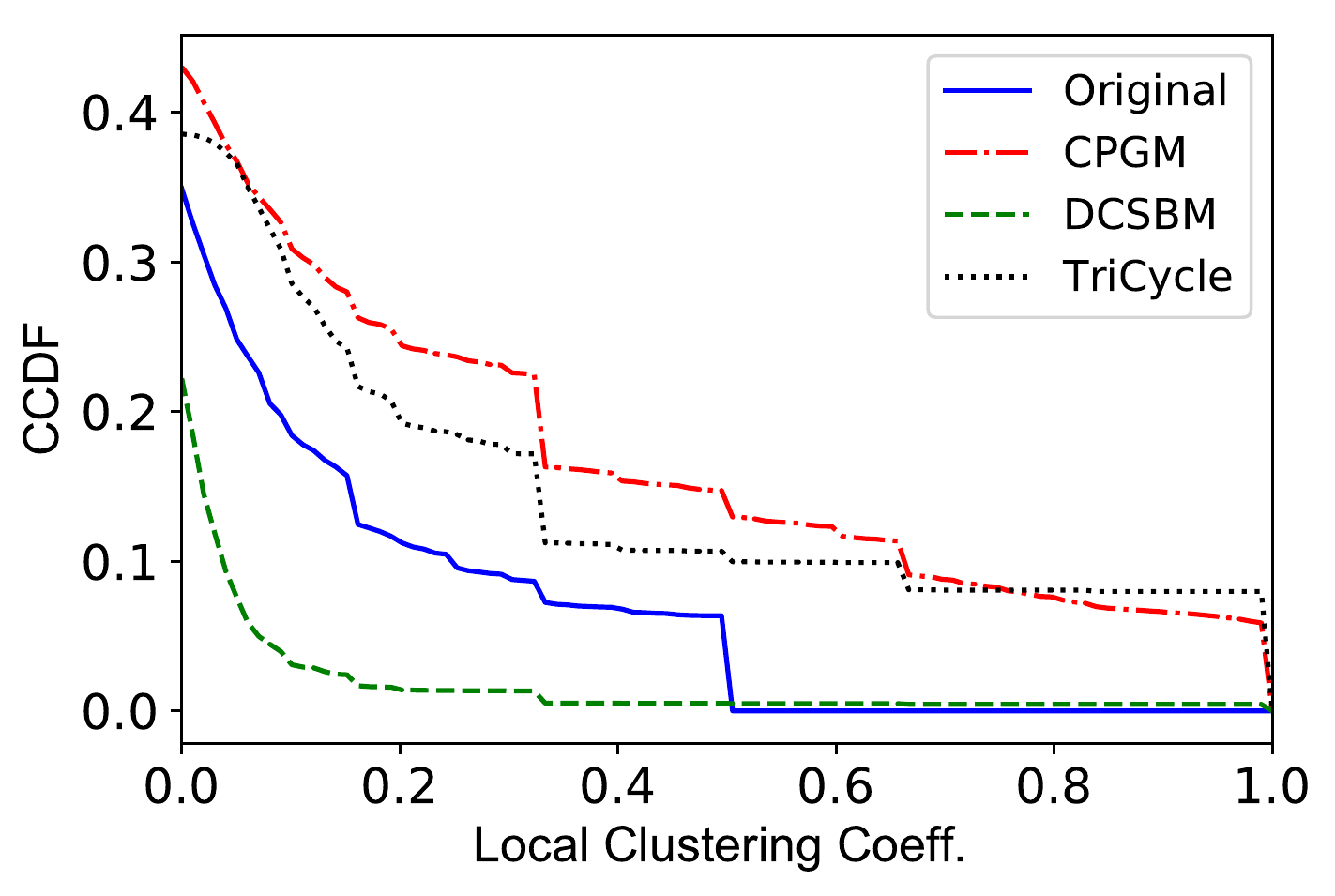}}
\caption{Detailed comparison of edge generative models 
in terms of complementary cumulative distribution functions 
of degree and local clustering coefficient distributions.}
\label{fig:degreeAndLcc} 
\end{figure*}

\end{document}